\documentclass[acmsmall]{acmart}

\usepackage{microtype}
\usepackage[utf8]{inputenc}
\usepackage[T1]{fontenc}
\usepackage{amsmath}
\usepackage{amsfonts}
\usepackage{bbm}
\usepackage{algorithmic}
\usepackage[linesnumbered,ruled,vlined]{algorithm2e}
\usepackage{graphicx}
\usepackage{textcomp}
\usepackage{xcolor,color,colortbl}
\usepackage{xspace}
\usepackage{cleveref}
\usepackage{hyperref}
\usepackage{wrapfig}
\usepackage{multirow}
\usepackage{xcolor}  
\usepackage{tikz}
\usepackage{enumitem}
\usepackage{float}

\AtBeginDocument{%
  }

\setcopyright{cc}
\setcctype{by}
\acmDOI{10.1145/3832252}
\acmYear{2026}
\acmJournal{PACMSE}
\acmVolume{3}
\acmNumber{ISSTA}
\acmArticle{ISSTA161}
\acmMonth{10}
\acmSubmissionID{issta26main-p1785-p}
\received{2026-01-29}
\received[accepted]{2026-04-16}

\begin{document}

\makeatletter\@namedef{T1/zi4/m/it}{<->ssub*zi4/m/n}\makeatother

\newcommand{\tool}{Mure\xspace}
\newcommand{\dms}{DM\#\xspace}

\crefformat{section}{\S#2#1#3}
\crefformat{subsection}{\S#2#1#3}
\crefformat{subsubsection}{\S#2#1#3}


\newcommand{\ie}{\textit{i.e.}\xspace}
\newcommand{\eg}{\textit{e.g.}\xspace}
\newcommand{\etc}{\textit{etc}\xspace}
\newcommand{\perse}{\textit{per se}\xspace}
\newcommand{\ala}{\textit{à la}\xspace}
\newcommand{\cf}{\textit{c.f.}\xspace}
\newcommand{\via}{\textit{via}\xspace}
\newcommand{\vs}{\textit{vs.}\xspace}
\newcommand{\etal}{\textit{et al.}\xspace}
\newcommand{\viceversa}{\textit{vice versa}\xspace}

\newcommand{\ali}[1]{\textcolor[rgb]{1.0,0.0,0.0}{#1}}
\newcommand{\ben}[1]{\textcolor[rgb]{0.0,0.0,1.0}{#1}}
\newcommand{\sasan}[1]{\textcolor[rgb]{0.0,1.0,1.0}{#1}}
\newcommand{\shibbir}[1]{\textcolor[rgb]{0.0,1.0,0.0}{#1}}

\SetKw{Continue}{continue}
\SetKw{Break}{break}
\SetKwInOut{Parameter}{parameter}
\SetKw{New}{new}
\SetKw{Null}{null}
\SetAlFnt{\scriptsize}
\SetAlCapFnt{\scriptsize}
\SetAlCapNameFnt{\scriptsize}
\setlength{\algomargin}{0.8em}



\definecolor{circleFillColor}{RGB}{153, 153, 255}
\newcommand{\encircled}[2][]{
  \tikz[baseline=(c.base)]
    \node[circle,
          draw=black,
          text=black,
          fill=circleFillColor,
          line width=0.3pt,
          inner sep=0.1ex,
          minimum size=1.0ex,
          font=\scriptsize,
          #1] (c) {#2};
}

\title{Provably Lossless Acceleration of DNN Mutation Testing \via Memoization}

\author{Ali Ghanbari}
\orcid{0000-0003-1471-2546}
\affiliation{%
  \institution{Auburn University}
  \department{Department of Computer Science and Software Engineering}
  \city{Auburn}
  \country{USA}
}
\email{ghanbari@auburn.edu}

\author{Ben Greenman}
\orcid{0000-0001-7078-9287}
\affiliation{%
  \institution{University of Utah}
  \department{Kahlert School of Computing}
  \city{Salt Lake City}
  \country{USA}
}
\email{blg@cs.utah.edu}

\author{Sasan Tavakkol}
\orcid{0000-0002-4111-4958}
\affiliation{%
  \institution{Google Research}
  \department{Google Research}
  \city{Irvine}
  \country{USA}
}
\email{tavakkol@google.com}

\author{Shibbir Ahmed}
\orcid{0000-0003-1183-883X}
\affiliation{%
  \institution{Texas State University}
  \department{Department of Computer Science}
  \city{San Marcos}
  \country{USA}
}
\email{shibbir@txstate.edu}

\begin{abstract}
Mutation analysis has recently reemerged in the context of deep neural networks (DNNs) as a promising, but notoriously costly, approach for assessing test dataset adequacy.
Existing techniques speed up DNN mutation testing through \textit{lossy} approximations that trade efficiency for mutation score accuracy.
This paper introduces \tool, the first provably lossless framework for accelerating DNN mutation testing \via memoization.
\tool is based on the idea that DNN mutants and the original model share substantial redundant computation, so during mutation testing, it executes only the mutated suffixes of each mutant and reuses the common prefix from the original model, which is computed only once.
We give a formal account of memoized mutation testing, and prove that \tool is sound, \ie, it produces results equivalent to exhaustive vanilla mutation testing, and identify basic conditions under which speed-up is guaranteed.
We have implemented \tool and evaluated it on 15 DNN models of various architectures, complexities, and sizes ranging from a few thousands to millions of parameters.
This provides empirical evidence that \tool reduces the computational cost of mutation testing by 44.54\%, on average.
We also observed that while state-of-the-art techniques tend to yield higher acceleration (up to 88.97\%, on average), they come at the cost of some error in mutation score.
We further analyze the effect of mutation generation selection ratio on the effectiveness of \tool and observed predictable reductions in memoization opportunities with increasing the percentage of mutated neurons.
We observed that \tool offers more than 20\% speed-up even when as high as 5\% of the neurons are mutated.
\end{abstract}

\begin{CCSXML}
    <ccs2012>
        <concept>
            <concept_id>10011007.10011074.10011099.10011102.10011103</concept_id>
            <concept_desc>Software and its engineering~Software testing and debugging</concept_desc>
            <concept_significance>500</concept_significance>
        </concept>
        <concept>
            <concept_id>10010147.10010257</concept_id>
            <concept_desc>Computing methodologies~Machine learning</concept_desc>
            <concept_significance>300</concept_significance>
        </concept>
    </ccs2012>
\end{CCSXML}

\ccsdesc[500]{Software and its engineering~Software testing and debugging}
\ccsdesc[300]{Computing methodologies~Machine learning}

\keywords{Deep Neural Network, Mutation Analysis, Memoization, Acceleration}

\maketitle

\section{Introduction}\label{sec:introduction}
\textit{Deep neural networks}~\cite{bib:lecun2015deep} (DNNs), have become important enabling technology in many modern software systems across many domains~\cite{bib:he2016deep,bib:hannun2014deep,bib:goldberg2022neural,bib:dahl2013large}.
Since DNNs are increasingly being used in safety- and mission-critical applications, from autonomous driving~\cite{grigorescu2020survey} to healthcare~\cite{esteva2019guide}, their quality assurance is critical.
Among various quality assurance approaches for DNNs~\cite{bib:liu2021algorithms,bib:huang2020survey}, testing is the most widely adopted one~\cite{bib:zhang2020machine}, where test data are either manually curated or automatically generated to satisfy specific adequacy requirements.
However, strong performance on a given test dataset does not necessarily guarantee robustness or generalization of the models and a principled means of evaluating the quality of test data is required.

Recently, \textit{mutation analysis}~\cite{bib:demillo1978hints,bib:hamlet1977testing} has been revisited in the context of DNNs~\cite{bib:shen2018munn,bib:ma2018deepmutation,bib:lu2022towards,bib:tambon2023probabilistic} as a promising technique for assessing test data quality (see~\cref{sec:back:mutation} for further background).
However, performing mutation analysis on DNNs is computationally expensive~\cite{bib:ma2018deepmutation,bib:hu2019deepmutation++,bib:shen2021boundary,bib:ghanbari2025using,bib:lyons2025on}.
The potential benefits of DNN mutation analysis has motivated researchers in recent years to investigate techniques for accelerating mutation analysis~\cite{bib:feng2022mutation,bib:li2021second,bib:li2022how,bib:ghanbari2025using,bib:ghanbari2024incite,bib:lyons2025on,bib:ghanbari2023mutation,bib:shen2021boundary}.
Unfortunately, all of the existing techniques, while effective, accelerate mutation analysis at the cost of incurring some error on the mutation score which can lead to errors in judging test dataset quality.
Such techniques are traditionally known as \textit{lossy}~\cite{bib:wang2017faster,bib:jia2010analysis}.
In conventional software systems, there exists a rich body of research on accelerating mutation testing~\cite{bib:pizzoleto2019systematic,bib:usaola2010mutation}, including both lossy and \textit{lossless} techniques, \ie, mutation analysis acceleration techniques that do not incur any mutation score error, which we also refer to as \textit{sound}.
However, since mutation testing for DNNs is relatively recent, lossless mutation analysis acceleration has not yet received much attention.

In this paper, we propose \tool, a provably lossless acceleration framework for DNN mutation testing.
The key insight behind \tool is that mutants and the original model share portions of their computation; rather than redundantly executing those shared portions for each mutant, \tool evaluates common computation once and \textit{safely} reuses it. 
Roughly speaking, given a DNN and its mutants, \tool first constructs a \textit{layer dependence graph} to capture how layers depend on each other.
Layer dependence graph is used to identify \textit{memoization boundaries} for the mutants, \ie, the collection of indices of the layers before the first mutated layer on which the layers on or after the first mutated layer depend.
\tool then groups mutants by their \textit{memoization boundaries}.
This way, we essentially group the mutants based on the deepest shared prefix of the mutants whose activations can be safely reused from the original model.
A \textit{memo table} is also constructed by instrumenting the original model once per each group guided by their shared memoization boundary.
Grouping mutants and constructing memo table once per group avoids creating large memo tables.
Mutants in each group are then \textit{chopped} right after their shared memoization boundary by replacing the shared prefix with a \textit{fabricated input layer}, which feeds inputs to the layers on or after the first mutated layer.
During mutation testing, the input for each chopped mutant is fetched from the memo table.
This way, only the mutated suffix of the mutants will be executed.
The outputs of the mutants are then used for calculating mutation score.

We formalize this process and prove that \tool is \textit{sound}, \ie lossless.
In other words, executing mutants through \tool is equivalent to executing the mutants exhaustively without memoization, aka, \textit{vanilla} mutation testing.
This implies that \tool produces mutation testing outcomes that are equivalent to vanilla approach, while avoiding redundant computations.
We prove a speed-up theorem showing that \tool is guaranteed to reduce execution cost whenever multiple mutants share a non-trivial prefix (see~\cref{sec:theory} for details).

We have implemented \tool in Python using Keras and Tensorflow frameworks.
It is applicable to supervised classifier Keras models with a wide range of architectures.
While our formal guarantees establish soundness and theoretical performance potential, they do not fully determine the practical effectiveness of \tool.
Specifically, some important factors are naturally empirical, \eg, the cost of model instrumentation, the overhead of memo table and mutant chopping, and external factors such as system load and stochasticity of mutation generation.
Moreover, to compare \tool to prior work, empirical comparison is necessary.
Thus, we conduct an extensive empirical evaluation using 15 DNN models with different architectures and complexities from different domains, including fully-connected neural networks (FCNNs), convolutional neural networks (CNNs) with and without residual blocks, and recurrent neural networks (RNNs).
Our experiments investigate two research questions.
First, we evaluate whether \tool delivers meaningful acceleration in practice and how it compares to state-of-the-art approaches.
We observed that, in our dataset, \tool reduces the cost of DNN mutation testing by 44.54\%, on average, all the while \textit{incurring no mutation score error}, as it was already established by our soundness theorem.
We further observed that other state-of-the-art techniques, \dms~\cite{bib:ghanbari2025using} and BSS~\cite{bib:shen2021boundary}, while provide higher acceleration (63.59\% and 88.97\%, on average, respectively), they incur mutation score error.
Second, we examine how mutation generation selection ratio with different values (\ie, mutating different percentage of neurons from the original model) affects memoization potential in \tool.
We observed a predictable decreasing trend: as the mutation generation selection ratio increases, the achievable speed-up gradually decreases due to reduced shared computation, but even in mutation generation selection ratios as high as 5\%, \tool still offers more than 20\% speed gain (see~\cref{sec:experiments} for more details).

In summary, this paper makes the following contributions.
\begin{itemize}
\item \textbf{Technique:} We propose \tool, the first memoization-based technique for accelerating DNN mutation testing. We have implemented \tool~\cite{bib:replica} to be applicable to a wide range of Keras models. \tool is publicly available in~\cite{bib:replica}.

\item \textbf{Theoretical Results:} We formally define the execution semantics of memoized mutation testing and prove that \tool is lossless and equivalent to vanilla mutation testing. We also establish a speed-up theorem that shows \tool is guaranteed to yield a strict reduction in computation cost whenever redundant prefix computation exists across mutants.

\item \textbf{Empirical Results:} Using 15 DNN models of varying sizes, architectures, and complexities, we evaluate the practicality of \tool in terms of efficiency and compare it to state-of-the-art mutation acceleration approaches.
Additionally, we study the impact of mutation generation selection ratio on the effectiveness of \tool.
\end{itemize}

\section{Background}\label{sec:background}
\subsection{Deep Neural Networks}\label{sec:back:dnn}
Deep neural networks (DNNs) use multiple interconnected layers of transformation functions to convert inputs to outputs, wherein each layer learns successively higher level of abstraction from the training data.
Each layer in a DNN consists of \textit{neurons}, which are essentially non-linear computational units that apply an \textit{activation function}, \eg, rectified linear unit (ReLU), to the weighted sum of their inputs.
The weighted edges connecting neurons are often called \textit{synapses}.

When every neuron in a layer connects to all neurons in the next layer, we refer to the network as a fully connected neural network (FCNN).
Besides FCNNs, other well-known DNN architectures include convolutional neural networks (CNNs) and recurrent neural networks (RNNs), both of which can also be represented as graphs of neurons and synapses.
For instance, a convolutional layer with a $8 \times 2 \times 2 \times 4$ filter can be interpreted as 4 neurons, each summing over 32 synapses.
Thus, throughout this paper we regard DNNs as graphs, allowing us to use the terms ``nodes'' and ``neurons,'' as well as ``edges'' and ``synapses,'' interchangeably.

\subsection{DNN Mutation Analysis}\label{sec:back:mutation}
Mutation analysis~\cite{bib:demillo1978hints,bib:hamlet1977testing} is a program analysis technique for evaluating test suite quality in conventional programs.
It relies on \textit{mutation operators}, aka \textit{mutators}, that systematically transform program elements to generate program variants called \textit{mutants}.
These mutants are tested using a test suite to compare their output to that of the original program.
If outputs differ, the mutant is marked \textit{killed}, otherwise it is regarded as \textit{survives}.
Some mutants survive because they are semantically equivalent to the original program, hence the name \textit{equivalent mutants}.
The effectiveness, \ie, mutation adequacy, of a test suite is measured by the \textit{mutation score}, which is defined as the ratio of killed to all non-equivalent mutants.
The higher the mutation score the stronger the test suite is.
Beyond test suite assessment, mutation analysis has been widely applied in various software engineering tasks~\cite{bib:papadakis2019mutation,bib:jia2010analysis}.

More recently, mutation analysis has been adapted for DNNs~\cite{bib:shen2018munn,bib:ma2018deepmutation}.
Like neuron coverage~\cite{bib:pei2017deepxplore} and its variants~\cite{bib:ma2018deepgauge}, and unlike surprise adequacy~\cite{bib:kim2019guiding}, DNN mutation analysis is a structural test dataset adequacy assessment method.
Like its conventional counterpart, DNN mutation analysis has had plenty of applications beyond its original use in test quality assessment.
These applications include adversarial sample detection and generation~\cite{bib:wang2019adversarial,bib:hu2023muten}, robustness evaluation~\cite{bib:hu2019deepmutation++}, test data prioritization~\cite{bib:wang2021prioritizing}, model accuracy estimation~\cite{bib:hu2023aries}, fault localization and repair~\cite{bib:ghanbari2023mutation,bib:sohn2023arachne,bib:wu2022genmunn} and modularity~\cite{bib:ghanbari2024incite}. 
Specialized mutation analysis frameworks further target particular DNN domains such as autonomous driving~\cite{bib:jahangirova2021quality} and reinforcement learning~\cite{bib:lu2022towards}.

DNN mutation analysis, whether \textit{source-based} (mutating the source code defining the DNN model) or \textit{model-based} (mutating the neurons or their learned weights directly), is extremely costly.
Given the potentials of DNN mutation analysis, several research projects in the past few years have aimed to reduce its costs (see~\cref{sec:related} for more details).
Following a taxonomy, common in the literature for mutation analysis acceleration of conventional programs~\cite{bib:wang2017faster,bib:jia2010analysis}, we can classify the DNN mutation analysis acceleration techniques into two general categories: (1) lossy techniques that accelerate mutation analysis at the cost of loss in mutation score; (2) lossless, aka sound, techniques that accelerate mutation analysis without impacting mutation score.
Virtually all the existing DNN mutation analysis acceleration techniques are lossy.
In this paper, we introduce a novel lossless mutation analysis technique based on memoization.

\section{\tool Approach}\label{sec:approach}
\begin{figure}
    \centering
    \includegraphics[width=\linewidth]{figures/mure-workflow.pdf}
    \caption{\tool workflow: artifacts are represented with boxes and arrows represent control flow} 
    \Description[\tool workflow]{\tool workflow: artifacts are represented with boxes and arrows represent control flow. Numbers are reference markers and do not correspond to the order of operations in \tool.}
    \label{fig:workflow}
\end{figure}
\tool is based on the idea that since multiple mutants as well as the original model share common portions, during testing we can evaluate those common portions once and reuse the results.
Specifically, if two mutants of a model involve mutating the layer(s) on or after layer indexed $n$, the layers $0,\dots, n-1$ are all unchanged are identical to the ones in the original model.
So, to test the two mutants, instead of testing them individually and reevaluating layers $0,\dots, n-1$ twice, we can evaluate the original model up to layer $n-1$ and then evaluate the mutants after layer $n-1$.
Later in~\cref{sec:theory} we will demonstrate that this is expected to result in a speed up without impacting the mutation score, and in~\cref{sec:experiments} we evaluate \tool empirically.

Fig.~\ref{fig:workflow} illustrates \tool's workflow at a high level.
Algorithm~\ref{alg:mure} is a more formal description.
\tool takes a model $M$, a test dataset $X$, and a mapping $\mathcal{M}$ of mutants of $M$.
The mapping $\mathcal{M}$ maps unique mutant identifiers to mutant objects.
Mutant generation, marked with \encircled{1} in Fig.~\ref{fig:workflow}, is explained in more details in~\cref{sec:app:mut-gen}.
For each mutant $M'$, the function $\mu$ returns the set of zero-based layer indices that have undergone mutation in $M'$.

\begin{figure}[t]
\begin{minipage}[t]{0.48\textwidth}
\begin{algorithm}[H]
\scriptsize
\SetKwBlock{Begin}{Method:}{end}
\caption{\tool memoized mutation testing}\label{alg:mure}
    \KwIn{Model $M$, test dataset $X$, a mapping $\mathcal{M}$ from mutant ids to mutants}
    \KwOut{A mapping from mutant ids to their output on $X$}
    \Begin{
        $dg\leftarrow \mathtt{getDepGraph}(M)$ \\
        $clusters \leftarrow \{\}$ \\
        \For{$(id, M') \in \mathcal{M}$} {
            $mmi\leftarrow \min(\mu(M'))$ \\
            $K \leftarrow \{j  \mid j \in dg[li] \wedge  j < mmi \wedge li \in mmi\dots|M.\mathtt{layers}|-1\}$ \\
            \uIf{$K\not\in \mbox{keys}(clusters)$} {
                $clusters[K] \leftarrow []$
            }
            $clusters[K].\mathtt{append}(id)$
        }
        $res \leftarrow \{\}$ \\
        \For{$(K, ids)\in clusters$} {
            $M_i\leftarrow \mathtt{instrument}(M, K)$ \\
            $MT\leftarrow M_i.\mathtt{predict}(X)$ \\
            \For{$id \in ids$} {
                $M' \leftarrow \mathcal{M}[id]$ \\
                $mmi\leftarrow \min(\mu(M'))$ \\
                $M'_\text{chopped}\leftarrow \texttt{chopMutant}(M', mmi, dg, K, MT)$ \\
                $res[id]\leftarrow M'_\text{chopped}.\mathtt{predict}([MT[k] \mid k\in K])$
            }
        }
        \Return {$res$}
    }
\end{algorithm}
\end{minipage}
\hfill
\begin{minipage}[t]{0.48\textwidth}
\begin{algorithm}[H]
\scriptsize
\SetKwBlock{Begin}{Method:}{end}
\caption{Dependence graph calculation: procedure \texttt{getDepGraph} used in Algorithm~\ref{alg:mure}}\label{alg:dep}
    \KwIn{Model $M$}
    \KwOut{A dictionary $dg$ mapping each layer index to a set of layer indices from which it receives inputs}
    \Begin{
            \uIf{$M$ is \texttt{Sequential}} {
                \Return{$\{i\mapsto \{i-1\}\mid 0< i < |M.\mathtt{layers}|\}\cup \{0\mapsto\emptyset\}$}
            }
            $dg\leftarrow\{\}$\\
                \For{$i$ in $0\dots|M.\mathtt{layers}|-1$} {
                    $L\leftarrow \emptyset$ \\
                    $l\leftarrow M.\mathtt{layers}[i]$ \\
                    \For{$n$ in $l$.\_inbound\_nodes} {
                        $L\leftarrow L\cup \{\mathtt{layerIndexOf}(n)\}$
                    }
                    $dg[i]\leftarrow L$
                }
            \Return{$dg$}
    }
\end{algorithm}
\end{minipage}
\Description[Two algorithms]{\tool memoized mutation testing and dependence graph calculation algorithms}
\end{figure}

\tool a constructs dependence graph for the model $M$ by calling \texttt{getDepGraph} (Line 2 in Algorithm~\ref{alg:mure} and step \encircled{2} in Fig.~\ref{fig:workflow}).
A dependence graph is a dictionary, mapping layer index $li$ to the set of layer indices $\{li_0,\dots,li_n\}$ if the layer at $li$ receives its inputs from layers indexed $li_0,\dots,li_n$, hence depends on them.
We present the details of this function in~\cref{sec:app:dep}.
\tool then clusters the mutants guided by the index of the earliest layer mutated in each mutant (step \encircled{3} in Fig.~\ref{fig:workflow}).
This process is formalized in Lines 3--9 in Algorithm~\ref{alg:mure}: for each mutant $M'$ with mutant id $id$, we find the layer indices that come before the earliest layer mutated by $M'$ such that some layer on or after the earliest mutated layer depends on them.
The activation values of these layers, indexed by set $K$, called \textit{memoization boundary}, will constitute the inputs to the \textit{chopped mutants}, described shortly.
Clusters are constructed by putting the mutant ids with equal $K$ in the same bucket.

After grouping the mutants based on their $K$ sets, \tool iterates through the clusters as follows.
It first instruments the original model $M$ at the layers indexed by $K$ (step \encircled{4} in Fig.~\ref{fig:workflow}).
The instrumented model is then applied on the given test dataset $X$ to construct a memo table (\encircled{5} in Fig.~\ref{fig:workflow}).
The memo table is essentially a dictionary mapping each data point in $X$ to the activation vectors of the layers indexed by $K$ at that data point.
Instrumentation is done by passing $M$ to the function \texttt{instrument}, which reads outputs of the layers indexed by $K$.
This process is explained in detail in~\cref{sec:app:instrument}.
After constructing the memo table for a group of mutants, \tool \textit{chops} the mutants within the cluster by discarding unmutated layers that come before the earliest mutated layer and replacing them with a single fabricated input layer (\encircled{6} in Fig.~\ref{fig:workflow}).
\tool does this while respecting dependencies between the layers.
The function \texttt{chopMutant}, explained in~\cref{sec:app:chopping}, is responsible for mutant chopping.
Each chopped mutant is then tested by retrieving their inputs from the memo table (\encircled{7} in Fig.~\ref{fig:workflow}).
The outputs of the chopped mutants are stored in a dictionary indexed by the mutant ids.
These steps are formalized in Lines 10--19 of Algorithm~\ref{alg:mure}.
In this paper, we use mutant outputs to calculate mutation score \ala Ma \etal~\cite{bib:ma2018deepmutation}, but the results could be used for virtually any other task that relies on DNN mutation, \eg, ~\cite{bib:ghanbari2023mutation,bib:ghanbari2024incite,bib:wang2019adversarial,bib:wang2021prioritizing}.

\subsection{Mutant Generation}\label{sec:app:mut-gen}
\tool takes a set of pre-generated mutants as input.
We use the mutation generation engine of DeepMutation ~\cite{bib:ma2018deepmutation} to generate the mutants, but any other mutation generator would also work.
Specifically, we use DeepMutation's \textit{Gaussian Fuzzing}, \textit{Neuron Activation Inverse}, \textit{Weight Shuffle}, and \textit{Neuron Effect Block} mutators to create a set of mutants for testing.
While DeepMutation offers more mutators, \eg, Layer Removal~\cite{bib:ma2018deepmutation}, previous studies~\cite{bib:jahangirova2020empirical,bib:wang2019adversarial} suggest that those are less effective in the sense that they are likely to create trivial mutants.
Thus, we use the aforementioned set of mutators only, some of which have also been used in other works~\cite{bib:wang2019adversarial,bib:ghanbari2024incite}.
These mutators may mutate more than one neuron at a time depending on the user configuration.
The degree of higher mutation impact in DeepMutation is controlled by a parameter called \textit{mutation generation selection ratio}, which specifies what percentage of neurons in a model should be selected for mutation.
The default value for this parameter is 1\%, \ie, 1\% of the neurons from the models are mutated.
We study the impact of mutation generation selection ratio on the effectiveness of \tool in~\cref{sec:experiments}.

\subsection{Constructing a Dependence Graph}\label{sec:app:dep}
Removing layers before the earliest mutation point is not always safe, especially in functional models where any layer can depend on any other layer and not necessarily the layer immediately before it.
\tool addresses this challenge by identifying dependencies between the layers so that if a layer on or after the earliest mutation point depends on layer(s) before the earliest mutation point, those layers are instrumented in the original model and their outputs are fed to the chopped models during memoized mutation testing.
Layer dependencies in \tool are represented as a \textit{dependence graph}.
A dependence graph is a mapping from layer indices to the set of layer indices on which the layer depends.
The procedure \texttt{getDepGraph}, which is presented in Algorithm~\ref{alg:dep}, is responsible for creating dependence graph.
This algorithm receives a Keras model as input and returns the dependence graph for the model.
In line 2, the algorithm checks if the model is a \texttt{Sequential} Keras model, in which case, the dependence graph simply maps each layer index to the layer index before it, and the 0\textsuperscript{th} layer is mapped to empty set, as the input layer does not depend on other layers (Line 3).
Otherwise, for models with more general architecture, \eg, a model with residual blocks, the algorithm constructs the dependence graph as follows.
It populates an empty mapping, by iterating through the layers and collecting the layer indices for the inbound nodes for each layer.
Keras provides an attribute \texttt{\_inbound\_nodes} for the layer objects that contains a list of nodes, \ie, neurons, whose outputs are fed into the given layer object.
In lines 8--9, the algorithm iterates through the inbound neurons and adds their layer indices to a set.
The function \texttt{layerIndexOf} returns the index of the layer containing a given neuron.
By adding the layer indices in a set, we remove duplicate layer indices in cases where a layer depends on multiple neurons from another layer.
After that, the algorithm maps the layer index to the set that it has just constructed.
The mapping is returned at Line 11.

\begin{figure}[t]
\centering
\begin{minipage}[t]{0.48\textwidth}
\begin{algorithm}[H]
\scriptsize
\SetKwBlock{Begin}{Method:}{end}
\caption{Model instrumentation: \texttt{instrument} procedure used in Algorithm~\ref{alg:mure}}\label{alg:instrument}
    \KwIn{Model $M$ and a list of layer indices $p$ serving as probe points}
    \KwOut{An instrumented models whose outputs are the activations at layers indexed by $p$}
    \Begin{
        \uIf{$M$ is \texttt{Sequential}} {
            $inp \leftarrow \mathtt{Input}(\mathtt{shape}=M.\mathtt{input\_shape})$ \\
            \uIf{$p = []$} {
                \Return {$\mathtt{Model}(\mathtt{inputs}=inp, \mathtt{outputs}=inp)$}
            }
            $x \leftarrow inp$ \\
            \For{ $i\in 0\dots |M.\mathtt{layers}|-1$ } {
                $x \leftarrow M.\mathtt{layers}[i](x)$ \\
                \uIf{$i \in p$} {
                     \Return {$\mathtt{Model}(\mathtt{inputs}=inp, \mathtt{outputs}=x)$}
                }
            }
            \textbf{error} ``Probe point not found''
        }
        $outs \leftarrow []$ \\
        \For{$i\in 0\dots |M.\mathtt{layers}|-1$} {
            $l\leftarrow M.\mathtt{layers}[i]$ \\
            \uIf{$i\in p$} {
                $outs.\texttt{append}(l.\texttt{output})$
            }
        }
        \Return {$\mathtt{Model}(\mathtt{inputs}=M.\mathtt{inputs}, \mathtt{outputs}=outs)$}
    }
\end{algorithm}
\end{minipage}
\hfill
\begin{minipage}[t]{0.48\textwidth}
\begin{algorithm}[H]
\scriptsize
\SetKwBlock{Begin}{Method:}{end}
\caption{Mutant chopping: procedure \texttt{chopMutant} used in Algorithm~\ref{alg:mure}}\label{alg:chopping}
    \KwIn{Mutant $M'$, minimum mutated layer index $mmi$, layer dependence graph $dg$, boundary layer indices $K$, and memo-table $MT$}
    \KwOut{$M'$ chopped at layer indexed $mmi$ taking inputs at that layer}
    \Begin{
        \uIf{$M'$ is \texttt{Sequential}} {
            $inp \leftarrow \mathtt{Input}(\mathtt{shape}=\mathtt{shape}(MT[k]))$, where $K=\{k\}$\\
            $x \leftarrow inp$ \\
            \For{$j \in mmi\dots|M.\mathtt{layers}|-1$} {
                $x \leftarrow M'.\mathtt{layer}[j](x)$
            }
            \Return{$\mathtt{Model}(\mathtt{inputs}=inp,\mathtt{outputs}=x)$}
        }
        $mutInp \leftarrow \{k\mapsto \mathtt{Input}(\mathtt{shape}=\mathtt{shape}(MT[k]))\mid k\in K\}$ \\
        $past\leftarrow \{\}$ \\
        \For{$j \in mmi\dots|M.\mathtt{layers}|-1$} {
            $l\leftarrow M'.\mathtt{layers}[j]$ \\
            $layerInp \leftarrow []$\\
            \For{$p \in dg[j]$} {
                \uIf{$p<mmi$} {
                    $layerInp.\mathtt{append}(mutInp[p])$
                } \uElse {
                    $layerInp.\mathtt{append}(past[p])$
                }
            } 
            $x \leftarrow l(\mathtt{inputs}=layerInp)$\\
            $past[j]\leftarrow x$
        } 
        \Return{$\mathtt{Model}(\mathtt{inputs}=[il\mid (\_, il)\in mutInp],\mathtt{outputs}=x)$}
    }
\end{algorithm}
\end{minipage}
\Description[Two algorithms]{Model instrumentation and mutant chopping algorithms}
\end{figure}

\subsection{Model Instrumentation and Memo Table Construction}\label{sec:app:instrument}
\tool records the output values, aka activations, of any layer before earliest mutation point that has dependent(s) on or after the earliest mutation point.
It does so by defining a new model, called \textit{instrumented model}, based on the existing \texttt{Sequential} or functional model.
The instrumented model receives the same inputs that the original model, but it outputs the activations of a selected list of layers that we call \textit{probe points}.
The procedure \texttt{instrument}, presented in Algorithm~\ref{alg:instrument}, creates an instrumented model given a Keras model $M$ and a list $p$ of probe points.
Algorithm~\ref{alg:mure} invokes this procedure by passing a memoization boundary $K$ as probe points.
If $M$ is \texttt{Sequential}, the algorithm creates an input layer that receives the same input shape as $M$.
If $p$ is empty, it means that the layer right after the input layer has been mutated, so the instrumented model acts as an identity function returning its inputs (Lines 4--5).
Otherwise, there is one probe point and the algorithm ``stitches'' layers together, \via layer invocation, to produce the output of the instrumented model (Lines 6--10).
Line 11 is a safeguard against passing an invalid $p$, \eg, the one that does not contain a valid layer index.
Handling functional models is easier in that there is no need for instantiation and stitching of the layers: the algorithm simply uses the collective outputs of the probe points as the output of the instrumented model (Lines 12--17).

The outputs of instrumented models are used to construct a memo table shared among all mutants with a common memoization boundary.
More concretely, given the test dataset $T$ and instrumented model $M_i$, a memo table is constructed by building the mapping $\{t\mapsto M_i(t)\mid t\in T\}$, where $M_i(t)$ returns the activations of $M_i$ at its probe points.

\subsection{Model Chopping}\label{sec:app:chopping}
Model chopping is the last preparation step before \tool executes memoized mutation testing over a cluster of mutants.
Given a mutant $M'$, the index $mmi$ of its earliest mutated layer, the dependence graph $dg$, the set of boundary layers $K$, and the memo table $MT$, the procedure \texttt{chopMutant}, described in Algorithm~\ref{alg:chopping}, constructs a new Keras model called the \textit{chopped mutant}.
Chopped mutant behaves like $M'$ on the portion of the network on and after layer index $mmi$, but it no longer contains any of the unmutated prefix layers.
Instead, it takes as input the activation values at the boundary layers in $K$, which are supplied at test time from the memo table (see~\cref{sec:app:instrument}).
Intuitively, \texttt{chopMutant} ``re-roots'' the mutant at the earliest mutation point while preserving all data dependencies from the removed prefix through explicit inputs.

For \texttt{Sequential} models, chopping is straightforward because the dependence structure is linear: the algorithm fabricates a single input corresponding to the boundary layer immediately preceding  $mmi$, and then reconnects the remaining layers in order, effectively treating the suffix starting at $mmi$ as a smaller sequential network whose first input is the memoized activation right before $mmi$ (Lines 2--7).
In functional models, the layers in the suffix may depend on multiple predecessors, some of which may lie in the prefix to be removed.
In this case, Algorithm~\ref{alg:chopping} creates one input layer for each boundary layer in $K$ (Line 8).
It then rebuilds the mutant from layer $mmi$ to the output layer by consulting the dependence graph as follows: if a layer's predecessor lies in the prefix, the algorithm uses the corresponding boundary input; otherwise, it uses the already reconstructed output of the predecessor (Lines 9--20).
This incremental reconstruction preserves the original dataflow of the mutant while discarding all computation before $mmi$.
The resulting chopped model behaves identically to the full mutant when evaluated with the memoized activations at the boundary layers, enabling \tool to avoid redundant recomputation of the shared prefix during mutation testing.

\section{Theoretical Results}\label{sec:theory}
In this section, we give a more formal account of the properties of the \tool memoized mutation testing algorithm.
Specifically, we present a proof of soundness, \ie, \tool does not change the mutation outcome (in other words, \tool is a lossless acceleration technique), and a proof that \tool is expected to result in speed-up.
We will first present the formalism for representing DNNs that both proofs rely on.
We would like to emphasize that both of these proofs rely on the fact that none of our mutators physically delete any of the layers or neurons in the model nor do they add new layers or neurons.
This is true for the mutators used in this paper (see~\cref{sec:app:mut-gen}).
We expect that this assumption will not diminish the usefulness of our theoretical results for most real-world situations because structure-altering mutators tend to generate trivial, easily-killable, mutants~\cite{bib:jahangirova2020empirical}, and are avoided by some works~\cite{bib:wang2019adversarial,bib:ghanbari2024incite}.

A DNN model is a directed acyclic graph $M=(V,E,\Phi,I,O)$, where $V=\{v_0, \dots, v_L\}$ is a set of nodes representing layers, $E\subseteq V\times V$ is a set of directed edges representing layer dependencies along which the tensors flow, $\Phi(v)$ is the layer function at $v$ mapping tensors from the previous layer(s) to an output tensor, $I\subseteq V$ is the non-empty set of input layers, and $O\subseteq V$ is the non-empty set of output layers.
In this definition, $I\neq O$.
In addition, there must be a topological ordering function $\tau: V\rightarrow \mathbb{N}$ such that $(v_i,v_j)\in E$ implies $\tau(v_i)<\tau(v_j)$, for all $v_i,v_j\in V$.

A \textit{mutant} $M'$, created by mutating $M$, is defined to be $M'=(V,E,\Phi',I,O)$, wherein $V$, $E$, $I$, and $O$ are unchanged, because none of the mutants alter the physical structure of the model.
The layer function, however, is defined as follows.
\begin{equation*}
    \Phi'(v)=\left\lbrace\begin{array}{ll}
        \Phi(v) &; v\in \mu(M')  \\
        \mbox{mutated layer function} &; \mbox{otherwise}
    \end{array}\right.
\end{equation*}
In this definition, $\mu(M')\subseteq V$ is a function that returns a subset of layers that have been the subject of mutation.
Based on this definition, $\Phi'$ is identical to $\Phi$ except at the mutated layers where $\Phi'$ use the layer function for the mutated layers.
The layer function for the mutated layers is left \textit{unspecified}, and none of our proofs need a precise characterization of the function.
The topological ordering $\tau$ allows us to formally define \textit{earliest mutated layer} of $M'$ as $mmi=\mbox{arg min}_{v\in \mu(M')}~\tau(v)$.

We define \textit{memoization boundary}, \ie, the required input set for a chopped mutant (defined shortly), as the set of all \textit{unmutated} layers whose outputs are needed to compute the mutated region.
This set is defined to be $K=\{p\mid (p,v)\in E~\mbox{such that}~\tau(v)\geq \tau(mmi)~\mbox{and}~\tau(p)<\tau(mmi)\}$.

The \textit{activation}, \ie, the output value, for each layer $v\in V$ in the original model $M$ is defined recursively as follows.
\begin{equation*}
    \mathbf{a}_v(\mathbf{x})=\left\lbrace\begin{array}{ll}
        \mathbf{x} & ; v\in I, \\
        \Phi(v)(\mathbf{a}_p(\mathbf{x})_{p\in Pred(v)}) & ; \mbox{otherwise}
    \end{array}\right.
\end{equation*}
where $Pred(v)=\{p\mid (p, v)\in E\}$ are the immediate predecessors of $v$.
We use the notation $M(\mathbf{x})$, defined as $(\mathbf{a}_v(\mathbf{x})_{v\in O})$, to represent the result of applying $M$ on a data point $\mathbf{x}$.
We define a memo table as function that maps each $k\in K$ to the tensor $\mathbf{a}_k(\mathbf{x})$, \ie, the activation of layer $k$ on input $\mathbf{x}$.
Similar to $\mathbf{a}_v$, we define $\mathbf{b}_v(\mathbf{x})$, the activation for each layer $v\in V$ in the mutated model $M'$ recursively using $\Phi'(v)$ in place of $\Phi(v)$ and $\mathbf{b}_p(\mathbf{x})$ instead of $\mathbf{a}_p(\mathbf{x})$.
We also define $M'(\mathbf{x})$ as $(\mathbf{b}_v(\mathbf{x})_{v\in O})$ representing the result of applying the mutant $M'$ on a data point $\mathbf{x}$.

A \textit{chopped mutant}, $M^\star$, is a DNN whose inputs consist of one tensor per boundary node $k\in K$, each having the same shape as the activation of layer $k\in K$ in the original model.
More concretely, a chopped mutant is a directed acyclic graph $M^\star=(V^\star,E^\star,\Phi^\star,I^\star,O)$, where $I^\star=\{u_k\mid k\in K\}$ where $u_k$ is a fresh input layer, $O$ is the set of output layer nodes as in the original model $M$, and $V^\star=\{v\mid v\in V~\mbox{and}~v~\mbox{is reachable from}~mmi~\mbox{in}~M\}\cup I^\star\cup O$.
The set $E^\star\subseteq V^\star\times V^\star$ is defined as follows.
For each $v\in V^\star$, $(p,v)\in E^\star$ if and only if (1) $(p,v)\in E$ and $p\in V^\star$, or (2) $p=u_k$ for some $k\in K$ and $(k,v)\in E$.
Finally, $\Phi^\star(v)$ is defined as follows.
\begin{equation*}
    \Phi^\star(v)=\left\lbrace\begin{array}{ll}
        \Phi'(v) &; v\in \mu(M'), \\
        \Phi(v) &; v\not\in \mu(M')~\mbox{and}~v\in (V^\star\cap V),\\
        \mathbf{id} &; v\in I^\star,
    \end{array}\right.
\end{equation*}
where $\mathbf{id}$ is the identity function returning the tensor it receives from the input without any computation.
The function $\Phi^\star$ behaves identical to $\Phi'$ for all mutated layers and it behaves identical to the original $\Phi$ for all unmutated layers.
The function models the behavior of input layers as identity functions.
Given this layer function, we define the \textit{activation} for each layer $v\in V^\star$ in the chopped model recursively as follows.
\begin{equation*}
    \mathbf{b}^\star_v(\mathbf{x})=\left\lbrace\begin{array}{ll}
        \mathbf{x} & ; v\in I^\star, \\
        \Phi^\star(v)(\mathbf{b}^\star_p(\mathbf{x})_{p\in Pred^\star(v)}) & ; \mbox{otherwise}
    \end{array}\right.
\end{equation*}
where $Pred^\star(v)=(Pred(v)\cap V^\star)\cup \{u_k\mid k\in K~\mbox{and}~k\in Pred(v)\}$. 
Finally, we define $M^\star(\mathbf{x})$ as $(\mathbf{b}^\star_v(\mathbf{x})_{v\in O})$ to represent the result of applying the chopped mutant $M^\star$ on a data point $\mathbf{x}$.

\subsection{Soundness}\label{sec:theory:soundness}
Our soundness proof relies on one more assumption that for any unmutated layer $v$, the function $\Phi(v)$ is deterministic and depends only on its inputs.
In particular, for inference in Keras, where \texttt{BatchNorm} and \texttt{Dropout} are in inference mode and no stateful RNNs are used or their states are reset, $\mathbf{a}_v(\mathbf{x})$ depends only on the inputs it receives, \textit{not} on prior test inputs or model state.
Please note that this assumption does not create any limitation for the applicability of our theoretical results, because as long as we run our models in inference mode, and do not load models with \texttt{stateful=True} (or set it to \texttt{False} before applying \tool), this assumption will hold.

The soundness proof states that the output of Algorithm~\ref{alg:chopping} is equal to that of the full mutant.
This way, instead of testing full mutants, one may test the chopped mutants and get the same mutation testing results.
Soundness directly follows from several lemmas described shortly: first we show that the output of Algorithm~\ref{alg:chopping} is equivalent to the theoretical model of chopped mutant that we defined in~\cref{sec:theory} (Lemma~\ref{lem:chopping:correct}), in that they both compute the same function, we then demonstrate that the theoretical model of chopped mutant is functionally equivalent to the full, un-chopped mutant (Lemma~\ref{lem:full-and-chopped:equiv}).
These results together imply that the output of Algorithm~\ref{alg:chopping} is functionally equivalent to the full, un-chopped mutant (Theorem~\ref{thm:soundness}), \ie, they both compute the same function.

Throughout this section, Let $M=(V,E,\Phi,I,O)$ be a DNN model, wherein layers set $V=\{v_0,\dots,v_L\}$ is ordered using Keras' topological order $\tau$, \ie, $\tau(v_i)=i$, for all $0\leq i\leq L$.
Before presenting our main results, we need to state and prove a helper lemma which concerns the correctness of Algorithm~\ref{alg:dep}.

\begin{lemma}[Dependence Graph Correctness]\label{lem:dg:correct}
    Let $dg=\mathtt{getDepGraph}(M)$.
    Then for every index $i$, we have $dg[i]=\{j\mid (v_j,v_i)\in E\}$.
\end{lemma}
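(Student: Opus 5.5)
The plan is a case split on the two branches of Algorithm~\ref{alg:dep}, combined with a modeling assumption that ties the formal edge set $E$ to the Keras graph. I would first state that assumption explicitly. Since $M$ is a Keras model, its abstract representation $(V,E,\Phi,I,O)$ is obtained by taking $v_i$ to be \texttt{M.layers[i]} (using Keras' topological order $\tau(v_i)=i$). We put $(v_j,v_i)\in E$ exactly when the output tensor of layer $j$ is consumed by layer $i$. For functional models this is recorded in \texttt{\_inbound\_nodes} of layer $i$. For \texttt{Sequential} models it is, by the semantics of \texttt{Sequential}, the chain $(v_{i-1},v_i)$ for $0<i\le L$. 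With this correspondence fixed, the lemma reduces to checking that each branch of the algorithm computes the predecessor set $Pred(v_i)$, re-indexed.

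\textbf{Sequential case.} If $M$ is \texttt{Sequential}, Line~3 returns $dg[0]=\emptyset$ and $dg[i]=\{i-1\}$ for $0<i<|M.\mathtt{layers}|$. I would argue that in a \texttt{Sequential} model each layer receives exactly the output of its immediate predecessor, and the first layer receives only the model input. Hence $E=\{(v_{i-1},v_i)\mid 0<i\le L\}$. Then $\{j\mid (v_j,v_0)\in E\}=\emptyset$ and $\{j\mid (v_j,v_i)\in E\}=\{i-1\}$ for $i>0$, matching $dg$ pointwise.

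\textbf{Functional case.} Fix $i$ and consider the loop at Lines~5--10. I would use the loop invariant that, after processing the first $m$ inbound nodes $n_1,\dots,n_m$ of layer $l=M.\mathtt{layers}[i]$, we have $L=\{\mathtt{layerIndexOf}(n_1),\dots,\mathtt{layerIndexOf}(n_m)\}$. This holds trivially at initialization ($L=\emptyset$) and is preserved by the union at Line~9. At loop exit, $dg[i]$ is the set of indices of layers owning some inbound node of $l$. Because sets absorb duplicates, this equals $\{j\mid \text{some output of } v_j \text{ feeds } v_i\}$, which is $\{j\mid (v_j,v_i)\in E\}$ by the definition of $E$ above. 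Since the outer loop ranges over every $i\in\{0,\dots,L\}$ and each $dg[i]$ is assigned exactly once and never overwritten, the equality holds for every index.

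\textbf{Main obstacle.} The step I expect to be delicate is justifying that \texttt{\_inbound\_nodes} together with \texttt{layerIndexOf} faithfully captures $E$. A shared layer may be called several times and hence have multiple inbound nodes, and input layers have inbound nodes that carry no predecessors. I would handle this by assuming each layer is invoked once in the model graph, which is the setting for the models considered here. For input layers, I would stipulate that their inbound nodes contribute no index, so $dg[i]=\emptyset$ for $v_i\in I$, which is consistent with $Pred(v_i)=\emptyset$. Under these conventions the correspondence is essentially definitional, and the remaining argument is the routine invariant above.
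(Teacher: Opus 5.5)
Your proposal is correct and follows the paper's proof. It uses the same case split on the \texttt{Sequential} and functional branches of Algorithm~\ref{alg:dep}, with the functional case resting on identifying $E$ with what \texttt{\_inbound\_nodes} and \texttt{layerIndexOf} report and using $\tau(v_j)=j$ to re-index; your explicit modeling convention, loop invariant, and remarks on shared and input layers make precise what the paper asserts in a single step.
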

\begin{proof}
    We proceed by a case analysis on the type of the model $M$: (1) $M$ is a \texttt{Sequential} model; (2) $M$ is a more general functional model.
    In case (1), the layers are linearly ordered according to $\tau$ and they are connected to each other in that order, so that all edges in $E$ have the form $(v_{i-1}, v_i)$.
    Additionally, according to Line 3 of Algorithm~\ref{alg:dep}, we have $dg[i]=\{i-1\}$, for $i>0$, and $dg[0]=\emptyset$.
    Thus, we can rewrite $dg[i]$ as $\{j\mid (v_j,v_i)\in E\}$, which is exactly the result that we were looking for.
    
    In case (2), while the layers are linearly ordered according to $\tau$, they are not necessarily connected to each other in that order.
    Based on lines 6--10 of Algorithm~\ref{alg:dep}, $dg[i]$ contains layer indices of the inbound neurons to the $i$-th layer, so $dg[i]=\{\tau(p)\mid (p,v_i)\in E\}$.
    Since $\tau(v_j)=j$, for all $0\leq j\leq L$, we can rewrite $dg[i]$ as $\{j\mid (v_j,v_i)\in E\}$.
\end{proof}

We now use this lemma to show that the output of Algorithm~\ref{alg:chopping} is functionally equivalent to the theoretical model of chopped mutant.

\begin{lemma}[Mutant Chopping Correctness]\label{lem:chopping:correct}
    Let $M'=(V,E,\Phi',I,O)$ be a mutant of $M$, and $mmi\in\{0,\dots,L\}$ be the index of the earliest mutated layer in $M'$.
    Let $K$ be the memoization boundary associated with $M'$ and $mmi$, $MT$ be the memo table associated with $M$ and $K$, and $M^\star=(V^\star,E^\star,\Phi^\star,I^\star,O^\star)$ be the chopped mutant, as defined in~\cref{sec:theory}.
    Let $dg=\mathtt{getDepGraph}(M)$, and $M^\star_{Alg}=\mathtt{chopMutant}(M',mmi,dg,K,MT)$, such that $M^\star_{Alg}=(V^\star_{Alg},E^\star_{Alg},\Phi^\star_{Alg},I^\star_{Alg},O^\star_{Alg})$.
    Then $M^\star$ and $M^\star_{Alg}$ compute the same function.
\end{lemma}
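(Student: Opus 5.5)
The plan is to read off the graph that Algorithm~\ref{alg:chopping} builds, show it is (up to renaming) the graph $M^\star$ of \cref{sec:theory}, and then show that both graphs compute the same activations by induction along the topological order $\tau$. Lemma~\ref{lem:dg:correct} lets us replace every use of $dg[j]$ in Algorithm~\ref{alg:chopping} by $Pred(v_j)=\{v_p\mid (v_p,v_j)\in E\}$. We treat the functional branch (Lines 8--20) as the main case. The \texttt{Sequential} branch (Lines 2--7) is the special case where every edge has the form $(v_{i-1},v_i)$. There $K=\{mmi-1\}$ when $mmi>0$; when $mmi=0$ there is no predecessor, and $K=\emptyset$ matches the identity instrumentation of Algorithm~\ref{alg:instrument}. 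So the sequential branch is the same construction with a single fabricated input, and the argument below specializes to it.

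\textbf{Step 1 (inputs).} The algorithm creates exactly one fresh input $mutInp[k]$ per $k\in K$, with the shape of $MT[k]=\mathbf{a}_k(\mathbf{x})$. We match $mutInp[k]$ with $u_k\in I^\star$, which has layer function $\mathbf{id}$. The model's input list is $[mutInp[k]]_{k\in K}$, so on a memo-table input both models receive the same tuple $(\mathbf{a}_k(\mathbf{x}))_{k\in K}$.

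\textbf{Step 2 (edges).} Let $j\ge mmi$. For each $p\in dg[j]=Pred(v_j)$, the algorithm wires $mutInp[p]$ when $p<mmi$ and $past[p]$ when $p\ge mmi$. If $p<mmi$, then $(v_p,v_j)\in E$ with $\tau(v_j)\ge\tau(mmi)$ and $\tau(v_p)<\tau(mmi)$, so $p\in K$ and $mutInp[p]$ is defined. This matches clause (2) of $E^\star$. If $p\ge mmi$, then $past[p]$ already exists, because $\tau$ is a topological order and the loop visits indices increasingly. This matches clause (1) of $E^\star$, provided $v_p\in V^\star$.

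Each layer object $l=M'.\mathtt{layers}[j]$ carries the function $\Phi'(v_j)$. That equals $\Phi^\star(v_j)$ whether or not $v_j\in\mu(M')$, since $\Phi'$ agrees with $\Phi$ off the mutated layers. Hence, for every rebuilt node, $Pred^\star$ coincides with the inputs the algorithm wires, and the layer functions agree.

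\textbf{Step 3 (the main obstacle: node sets).} The algorithm rebuilds every $v_j$ with $j\ge mmi$, whereas $V^\star$ keeps only the nodes reachable from $mmi$, together with $I^\star$ and $O$. So $M^\star_{Alg}$ may contain extra nodes: layers with $j\ge mmi$ that are not reachable from $mmi$. The plan is to show these extras do not affect the outputs. Take any node on a path to an output. Either it lies in $V^\star$, or it is an extra suffix node computed purely from prefix activations through $u_k$ inputs or other extra nodes. I will argue that the output function of $M^\star_{Alg}$ depends only on the ancestors of $O$, and that these nodes compute the same values as the corresponding nodes of $M^\star$.

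For non-reachable suffix nodes that feed an output in $O$, I expect to have to read the definition of $V^\star$ generously: such nodes are ancestors of $O$, and the intended $V^\star$ must contain them for $M^\star$ to be well defined. If that reading is not available, I will note that for the Keras graphs produced, every node with $j\ge mmi$ that feeds $O$ receives its prefix inputs through $K$. In that case the same recursion applies to it verbatim.

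\textbf{Step 4 (induction).} By strong induction on $j$ from $mmi$ upward, we show that the tensor $past[j]$ evaluated on $(\mathbf{a}_k(\mathbf{x}))_{k\in K}$ equals $\mathbf{b}^\star_{v_j}$ on the same input. The base case and the inductive step both use Step 2: the arguments are either $u_k$ values, which are equal by Step 1, or earlier $past[p]$ values, which are equal by the inductive hypothesis. The same layer function is then applied to both. The returned output $x$ is the last layer's output, which is the output node in $O$. The multi-output case is handled identically componentwise. Hence $M^\star_{Alg}(\mathbf{y})=M^\star(\mathbf{y})$ for every input tuple $\mathbf{y}$, which establishes the lemma.
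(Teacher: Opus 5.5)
Your proposal follows the paper's proof. The paper also builds a correspondence $h$ with $u_k\mapsto mutInp[k]$ and $v_j\mapsto past[j]$ for $j\ge mmi$. It uses Lemma~\ref{lem:dg:correct} to show that the inputs wired into $past[j]$ are exactly the images of $Pred^\star(v_j)$, and it matches layer functions by noting that the layer objects of $M'$ are reused rather than cloned. The \texttt{Sequential} branch is treated as the linear special case. Your Step~4 induction only makes explicit the step ``isomorphic graphs with equal layer functions compute the same function,'' which the paper leaves implicit.

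\textbf{Step~3: adopt the generous reading outright and drop the fallback.} You are right that something is off here. The paper never argues that the node sets agree; it simply asserts $V\cap V^\star=\{v_j\mid j\ge mmi\}$, both here and again in Lemma~\ref{lem:full-and-chopped:equiv}. That identity is your ``generous reading.'' Treat it as a correction to the definition of $V^\star$, because the fallback cannot work: under the literal ``reachable from $mmi$'' definition the lemma is false.

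\emph{Counterexample.} Take $E=\{(v_0,v_1),(v_0,v_2),(v_1,v_3),(v_2,v_3)\}$, where $v_3$ is an \texttt{Add} output and only $v_1$ is mutated. Then:
\begin{itemize}
\item $K=\{v_0\}$ and $v_2\notin V^\star$.
\item Clause (1) of $E^\star$ drops $(v_2,v_3)$, and clause (2) adds nothing because $(v_0,v_3)\notin E$.
\item Hence $Pred^\star(v_3)=\{v_1\}$, and $M^\star$ is ill-formed.
\item Algorithm~\ref{alg:chopping}, by contrast, correctly builds $past[2]=v_2(mutInp[0])$ and feeds it to $v_3$.
\end{itemize}
This shape arises, for example, when a shortcut branch contains a layer indexed after a mutated main-path layer.

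The trouble is not how the extra node $v_2$ obtains its prefix inputs; in $M^\star_{Alg}$ it gets them through $K$ without difficulty. The trouble is that $M^\star$ has no node $v_2$ and no value $\mathbf{b}^\star_{v_2}$ to compare against. So ``the same recursion applies verbatim'' has nothing to apply to. Likewise, the plan to show that the extra nodes ``do not affect the outputs'' fails, since in this example they feed the output.

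\textbf{Multiple outputs.} Your remark that multiple outputs are ``handled componentwise'' is not true of Algorithm~\ref{alg:chopping} as written, which returns only the last rebuilt tensor $past[L]$. Like the paper, you should assume $O=\{v_L\}$.
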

\begin{proof}
    We show that there exists a bijection $h:M^\star\rightarrow M^\star_{Alg}$, such that:
    \begin{enumerate}[label=(\alph*)]
        \item \textit{Nodes preserved}: For each boundary node $k\in K$, $h(u_k)$ is the input layer created in Algorithm~\ref{alg:chopping} for $k$. For each internal node $v\in V\cap V^\star$, $h(v)$ is the node constructed based on $v$.
        \item \textit{Edge structure preserved}: For each $p,v\in V$, $(p,v)\in E^\star\iff (h(p),h(v))\in E^\star_{Alg}$.
        \item \textit{Layer functions preserved}: For each $v\in V^\star$, $\Phi^\star(v)=\Phi^\star_{Alg}(h(v))$.
    \end{enumerate}
    The proof proceeds by a case analysis on the type of mutant $M'$: (1) the mutant $M'$ is \texttt{Sequential} model; (2) $M'$ is a more general functional model.
    \paragraph{\textbf{Case (1)}: (a) Nodes preserved}
    Since $M'$ is \texttt{Sequential}, we have $K=\{v_{mmi-1}\}$, and hence $I^\star=\{u_{v_{mmi-1}}\}$.
    Algorithm~\ref{alg:chopping} creates exactly one input node $inp=\mathtt{Input}(\mathtt{shape}=\mathtt{shape}(MT[v_{mmi-1}]))$ at Line 3.
    We define $h(u_{v_{mmi-1}})=inp$, which clearly is a bijective mapping.
    To prove that $h$ maps each node at the mutated suffix of the mutant to one and only one node in the chopped mutant, we first note that $V\cap V^\star=\{v_j\mid j\geq mmi\}$.
    At lines 4--6, Algorithm~\ref{alg:chopping} creates the chain: $x_{mmi-1}=inp$, $x_i=v_j(x_{i-1})$, for each $mmi\leq j\leq L$.
    Note that the notation $v_j(\cdot)$ is borrowed from Python, which is equivalent to calling \texttt{\_\_call\_\_} method of the layer object $v_j$.
    This call results in reusing layer $v_j$ in the created chopped mutant, without cloning the layer object, and wiring $x_{i-1}$ to $x_i$.
    We now define $h(v_j)=x_j$, for $j\geq mmi$.
    This is also a bijective mapping.
    \paragraph{(b) Edge structure preserved}
    For a \texttt{Sequential} mutant, $E^\star=\{(u_{v_{mmi-1}}, v_{mmi})\}\cup\{(v_{j-1},v_j)\mid j>mmi\}$.
    Algorithm~\ref{alg:chopping} constructs exactly these edges: At line 3 and the first iteration of the \textbf{for}-loop at line 5, it connects $inp=h(u_{v_{mmi-1}})$ to $h(v_{mmi})$.
    In subsequent iterations of the loop, it connects $h(v_j)$ to $h(v_{j+1})$.
    Therefore, $(p,v)\in E^\star\iff (h(p),h(v))\in E^\star_{Alg}$.
    \paragraph{(c) Layer functions preserved}
    For memoization boundary nodes, $\Phi^\star(u_{v_{mmi-1}})=\mathbf{id}$.
    The node $inp=\mathtt{Input}(\mathtt{shape}=\mathtt{shape}(MT[v_{mmi-1}]))$ created by Algorithm~\ref{alg:chopping} also implements identity.
    Thus, $\Phi^\star_{Alg}(inp)=\mathbf{id}$.
    For each internal node $v_j$, $\Phi^\star(v_j)=\Phi'(v_j)$.
    Algorithm~\ref{alg:chopping} also reuses the nodes from the mutant $M'$, through the calling and chaining mechanism described above.
    So, $\Phi^\star_{Alg}(h(v_j))=\Phi'(v_j)$, for each $j\geq mmi$.
    Therefore, $\Phi^\star(v)=\Phi^\star_{Alg}(h(v))$, for each $v\in V^\star$.
    \paragraph{\textbf{Case (2)}: (a) Nodes preserved}
    For chopped mutant $I^\star=\{u_k\mid k\in K\}$.
    At line 8, Algorithm~\ref{alg:chopping} constructs the mapping $mutInp[k]=\mathtt{Input}(\mathtt{shape}=\mathtt{shape}(MT[k]))$, for each $k\in K$.
    We set $h(u_k)=mutInp[k]$, for $k\in K$.
    These are clearly bijective mappings.
    To prove that $h$ maps each node at the mutated suffix of the mutant to one and only one node in the chopped mutant, we first note that $V\cap V^\star=\{v_j\mid j\geq mmi\}$.
    At lines 10--19, Algorithm~\ref{alg:chopping} defines $past[j]=v_j(\mbox{inputs constructed at lines 12--17})$ for each $j\in \{mmi,\dots,L\}$.
    As we discussed in the \texttt{Sequential} case, $v_j(\cdot)$, where $v_j\in V$, reuses $v_j$ in the construction of $M^\star_{Alg}$ without cloning the layer.
    We define $h(v_j)=past[j]$, which clearly is a bijective mapping.
    \paragraph{(b) Edge structure preserved}
    We note that $Pred^\star(v)=(Pred(v)\cap V^\star)\cup \{u_k\mid k\in K\cap Pred(v)\}$.
    It follows from Lemma~\ref{lem:dg:correct} that $Pred(v_j)=\{v_p\mid p\in dg[j]\}$.
    At lines 14--18, for each $p\in dg[j]$, where $j\in\{mmi,\dots, L\}$, Algorithm~\ref{alg:chopping} connects $mutInp[p]=h(u_{v_p})$ to $past[j]=h(v_j)$, if $p<mmi$ (hence $v_p\in K$), and it connects $past[p]=h(v_p)$ to $past[j]=h(v_j)$, if $p\geq mmi$.
    After the \textbf{for}-loop of line 13, the incoming edges for $past[j]$ are emanating from the nodes $\{h(v_p)\mid p\in dg[j]~\mbox{and}~p\geq mmi\}\cup\{h(u_{v_k})\mid k\in dg[j]~\mbox{and}~k<mmi\}$.
    Therefore, we have $(p,v_j)\in E^\star\iff (h(p),h(v_j))\in E^\star_{Alg}$
    \paragraph{(c) Layer functions preserved}
    This proof proceeds by a straightforward generalization of the sequential case to compare boundary nodes $u_k$ and the nodes $mutInp[k]$.
    
    
\end{proof}

We next show that the theoretical chopped mutant is equivalent to the full, un-chopped mutant.

\begin{lemma}[Full and Chopped Mutant Equivalence]\label{lem:full-and-chopped:equiv}
    Let $M'=(V,E,\Phi',I,O)$ be a mutant of $M$, and $mmi\in\{0,\dots,L\}$ be the index of the earliest mutated layer in $M'$.
    Let $K$ be the memoization boundary associated with $M'$ and $mmi$, $MT$ be the memo table associated with $M$ and $K$, and $M^\star=(V^\star,E^\star,\Phi^\star,I^\star,O^\star)$ be the chopped mutant, as defined in~\cref{sec:theory}.
    We have $M^\star(\mathbf{a}_k(\mathbf{x})_{k\in K})=M'(\mathbf{x})$, for every data point $\mathbf{x}$.
\end{lemma}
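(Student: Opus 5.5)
We prove the equation by strong induction on the topological order $\tau$, comparing the chopped activations $\mathbf{b}^\star_v$ with the mutant activations $\mathbf{b}_v$. Throughout, the input to the chopped mutant is $\mathbf{y}=(\mathbf{a}_k(\mathbf{x}))_{k\in K}$, with $u_k$ receiving $\mathbf{a}_k(\mathbf{x})$. The claim is
\[
\mathbf{b}^\star_v(\mathbf{y})=\mathbf{b}_v(\mathbf{x})\quad\text{for every } v\in V\cap V^\star .
\]
Taking $v\in O$ then gives $M^\star(\mathbf{y})=M'(\mathbf{x})$ directly from the definitions of $M^\star(\cdot)$ and $M'(\cdot)$.

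\textbf{Preliminary observation.} Every layer strictly before the earliest mutated layer is unmutated, so it computes the same activation in $M$ and $M'$. Formally, $\mathbf{b}_p(\mathbf{x})=\mathbf{a}_p(\mathbf{x})$ for all $p$ with $\tau(p)<\tau(mmi)$, proved by induction on $\tau$. The base case is $p\in I$, where both sides equal $\mathbf{x}$. In the inductive step, $\Phi'(p)=\Phi(p)$ because $p\notin\mu(M')$, and every predecessor of $p$ has smaller $\tau$. This step relies on the determinism assumption stated at the start of the soundness subsection, namely that $\Phi(v)$ depends only on its inputs.

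\textbf{Main induction.} Take $v\in V\cap V^\star$ and compare
\[
\mathbf{b}_v(\mathbf{x})=\Phi'(v)\bigl(\mathbf{b}_p(\mathbf{x})_{p\in Pred(v)}\bigr)
\quad\text{and}\quad
\mathbf{b}^\star_v(\mathbf{y})=\Phi^\star(v)\bigl(\mathbf{b}^\star_q(\mathbf{y})_{q\in Pred^\star(v)}\bigr).
\]
First, the layer functions agree: $\Phi^\star(v)=\Phi'(v)$, checked on the two cases $v\in\mu(M')$ and $v\notin\mu(M')$ of the definition of $\Phi^\star$. Second, the arguments agree, which needs a bijection between $Pred(v)$ and $Pred^\star(v)$ that preserves tensor values. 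Each $p\in Pred(v)$ falls into one of two cases.
\begin{itemize}
\item If $\tau(p)<\tau(mmi)$, then $p\in K$, since $(p,v)\in E$ and $\tau(v)\ge\tau(mmi)$. It corresponds to $u_p\in Pred^\star(v)$, and $\mathbf{b}^\star_{u_p}(\mathbf{y})=\mathbf{a}_p(\mathbf{x})=\mathbf{b}_p(\mathbf{x})$ by the preliminary observation.
\item If $\tau(p)\ge\tau(mmi)$, then $p\in V^\star$, and the induction hypothesis gives $\mathbf{b}^\star_p(\mathbf{y})=\mathbf{b}_p(\mathbf{x})$, since $\tau(p)<\tau(v)$.
\end{itemize}
Conversely, every element of $Pred^\star(v)$ arises in exactly one of these two ways. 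Within $Pred(v)\cap V^\star$ there is no ambiguity, because the fresh nodes $u_k$ are distinct from the original nodes.

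\textbf{Main obstacle: the definition of $V^\star$.} As written, $V^\star$ is restricted to nodes reachable from $mmi$ (plus $O$). The set that makes the induction go through is instead $\{v:\tau(v)\ge\tau(mmi)\}$, which is also the set Lemma~\ref{lem:chopping:correct} implicitly uses. Under the literal definition, two problems appear:
\begin{itemize}
\item a node $p$ with $\tau(p)\ge\tau(mmi)$ that is not reachable from $mmi$ but feeds a reachable $v$ is absent from $Pred^\star(v)$, and it is not in $K$ either;
\item an output node with $\tau<\tau(mmi)$ has no predecessors in $V^\star$.
\end{itemize}
I would first adopt the reading $V\cap V^\star=\{v_j\mid j\ge mmi\}$, which matches the statement of Lemma~\ref{lem:chopping:correct}, and carry out the induction as above. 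If the literal reading must be kept, I would handle it separately:
\begin{itemize}
\item nodes in $V^\star$ that are not reachable from $mmi$ (including such outputs) only depend on unmutated data, so their $\mathbf{b}$ values equal their $\mathbf{a}$ values, and the induction would extend $K$ to carry those values;
\item I would also add a short remark that the boundary nodes feeding such nodes must be included accordingly.
\end{itemize}
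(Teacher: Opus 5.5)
Your main argument is correct and is essentially the paper's proof: induction on $\tau$ over $V\cap V^\star=\{v_j\mid j\ge mmi\}$. Predecessors with $\tau(p)<\tau(mmi)$ go through $u_p$, the identity and $\mathbf{b}_p(\mathbf{x})=\mathbf{a}_p(\mathbf{x})$, predecessors with $\tau(p)\ge\tau(mmi)$ go through the induction hypothesis, and $\Phi^\star(v)=\Phi'(v)$ follows by splitting on $v\in\mu(M')$. You improve on it by folding the separate base case into strong induction, by proving the prefix-agreement fact the paper only asserts, and by correctly noticing that the paper silently uses $\{v_j\mid j\ge mmi\}$ rather than its literal reachability definition of $V^\star$.

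Drop your fallback for the literal reading, though. Not being reachable from $mmi$ does not mean depending only on unmutated data: $\mu(M')$ may contain further layers on a parallel branch that $mmi$ does not reach (e.g.\ a mutated projection shortcut). Carrying their $\mathbf{a}$-values through an enlarged $K$ would then give wrong activations. The right repair is the reading you adopt first, which is also the one used in the proof of Lemma~\ref{lem:chopping:correct} and by Algorithm~\ref{alg:chopping}.
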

\begin{proof}
    By expanding the definitions of model applications, we can see that the proof goal is $(\mathbf{b}^\star_v(\mathbf{a}_k(\mathbf{x})_{k\in K})_{v\in O})=(\mathbf{b}_v(\mathbf{x})_{v\in O})$, for every data point $\mathbf{x}$.
    We prove this via a stronger proposition that considers all layers (including the output layer): for every $t\in \{mmi, mmi+1, \dots, L\}$, and every data point $\mathbf{x}$, we have $\mathbf{b}^\star_{v_t}(\mathbf{a}_k(\mathbf{x})_{k\in K})=\mathbf{b}_{v_t}(\mathbf{x})$.
    Note that $\{v_t\mid mmi\leq t\leq L\}=V^\star\cap V$.
    We proceed by induction on the topological ordering $\tau$ over $V^\star\cap V$.

    \paragraph{Induction Basis}
    Let $t=mmi$.
    In this case, we want to prove $\mathbf{b}^\star_{v_{mmi}}(\mathbf{a}_k(\mathbf{x})_{k\in K})=\mathbf{b}_{v_{mmi}}(\mathbf{x})$.
    By expanding the $\mathbf{b}^\star_{v_{mmi}}$ and $\mathbf{b}_{v_{mmi}}$, we get $\Phi^\star(v_{mmi})(b^\star_p(\mathbf{a}_k(\mathbf{x})_{k\in K})_{p\in Pred^\star(v_{mmi})})=\Phi'(v_{mmi})$ $(\mathbf{b}_p(\mathbf{x})_{p\in Pred(v_{mmi})})$.
    Since $v_{mmi}$ is, by definition, mutated, \ie, $v_{mmi}\in \mu(M')$, $\Phi^\star(v_{mmi})$ evaluates to $\Phi'(v_{mmi})$.
    Additionally, since $Pred^\star(v_{mmi})\subseteq I^\star$, \ie, all the predecessors of the earliest mutated layer are the inputs to the chopped mutant, by definition, $b^\star_p$ is the identity function for each $p\in Pred^\star(v_{mmi})$.
    Furthermore, since nodes in $Pred(v_{mmi})$ are not mutated, we have $\mathbf{b}_p(\mathbf{x})_{p\in Pred(v_{mmi})}=\mathbf{a}_p(\mathbf{x})_{p\in Pred(v_{mmi})}$.
    This is because the full mutant is equal to the original model before the earliest point of mutation.
    Thus, we get equal expressions on the two sides of equation: $\Phi'(v_{mmi})(\mathbf{a}_p(\mathbf{x})_{p\in Pred(v_{mmi})})=\Phi'(v_{mmi})(\mathbf{a}_p(\mathbf{x})_{p\in Pred(v_{mmi})})$.

    \paragraph{Induction Step}
    Take an index $t>1$, where $mmi\leq t< L$.
    Assume that for every index $j<t$ the above proposition holds.
    That is to say, $\mathbf{b}^\star_{v_j}(\mathbf{a}_k(\mathbf{x})_{k\in K})=\mathbf{b}_{v_j}(\mathbf{x})$, for every data point $\mathbf{x}$.
    This is our \textit{induction hypothesis}.
    
    We need to prove $\mathbf{b}^\star_{v_t}(\mathbf{a}_k(\mathbf{x})_{k\in K})=\mathbf{b}_{v_t}(\mathbf{x})$ for every data point $\mathbf{x}$.
    We start by expanding the definitions of $\mathbf{b}^\star_{v_t}$ and $\mathbf{b}_{v_t}$.
    Given that $v_t$ is not an input layer, we have $\Phi^\star(v_t)(\mathbf{b}^\star_p(\mathbf{a}_k(\mathbf{x})_{k\in K})_{p\in Pred^\star(v_t)})=\Phi'(v_t)(\mathbf{b}_p(\mathbf{x})_{p\in Pred(v_t)})$.
    Now we have two cases to examine: (1) $\tau(p)\geq mmi$, meaning that the predecessor of $v_t$ is not an input to the chopped mutant; (2) $\tau(p)< mmi$, meaning that the predecessor of $v_t$ is outside of the chopped region, so $p$ must be an input to the chopped mutant (the left-hand side of the equation), \ie, we have $p\in I^\star$ for the chopped mutant and $p\in K$ for the full mutant (the right-hand side of the equation).
    
    Case (1) directly follows from the induction hypothesis.
    To prove case (2), we fix $p\in K$ and consider $u_p\in I^\star$ as the corresponding input layer for the chopped mutant.
    So, what we want to prove is $\Phi^\star(v_t)(\mathbf{b}^\star_{u_p}(\mathbf{a}_p(\mathbf{x})))=\Phi'(v_t)(\mathbf{b}_p(\mathbf{x}))$.
    Note that $\mathbf{a}_p$ denotes the memo table entry that is fed to $u_p$.
    Now since $\mathbf{b}^\star_{u_p}=\mathbf{id}$, our proof obligation reduces to $\Phi^\star(v_t)(\mathbf{a}_p(\mathbf{x}))=\Phi'(v_t)(\mathbf{b}_p(\mathbf{x}))$.
    But since $p\in K$, which is outside of the mutated region, we have $\mathbf{b}_p(\mathbf{x})=\mathbf{a}_p(\mathbf{x})$.
    This is because the full mutant is equal to the original model before the earliest point of mutation.
    Thus, we get $\Phi^\star(v_t)(\mathbf{a}_p(\mathbf{x}))=\Phi'(v_t)(\mathbf{a}_p(\mathbf{x}))$.
    To complete the proof, we do yet another case analysis: (a) $v_t\in\mu(M')$, meaning that $v_t$ is a mutated layer; (b) $v_t\not\in\mu(M')$, meaning that $v_t$ is not mutated.
    In the first case, we have $\Phi^\star(v_t)=\Phi'(v_t)$, by definition.
    In the second case, we know that $v\in(V^\star\cap V)$, so both $\Phi^\star(v_t)$ and $\Phi'(v_t)$ will evaluate to $\Phi(v_t)$.
    This completes the proof.
\end{proof}

Finally, we are in a position where we can state our main soundness result.

\begin{theorem}[Soundness]\label{thm:soundness}
    Let $M'=(V,E,\Phi',I,O)$ be a mutant of $M$, and $mmi\in\{0,\dots,L\}$ be the index of the earliest mutated layer in $M'$.
    Let $K$ be the memoization boundary associated with $M'$ and $mmi$, $MT$ be the memo table associated with $M$ and $K$, as defined in~\cref{sec:theory}.
    Let $dg=\mathtt{getDepGraph}(M)$, and $M^\star_{Alg}=\mathtt{chopMutant}(M',mmi,dg,K,MT)$, such that $M^\star_{Alg}=(V^\star_{Alg},E^\star_{Alg},\Phi^\star_{Alg},I^\star_{Alg},O^\star_{Alg})$.
    For every data point $\mathbf{x}$, we have $M^\star_{Alg}((\mathbf{a}_k(\mathbf{x})_{k\in K})=M'(\mathbf{x})$.
\end{theorem}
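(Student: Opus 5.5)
The plan is to obtain Theorem~\ref{thm:soundness} by chaining Lemma~\ref{lem:chopping:correct} and Lemma~\ref{lem:full-and-chopped:equiv} through transitivity of functional equality. We fix an arbitrary data point $\mathbf{x}$ and take the tuple of memo-table entries $(\mathbf{a}_k(\mathbf{x}))_{k\in K}$, which is exactly $[MT[k]\mid k\in K]$ as fed at Line 18 of Algorithm~\ref{alg:mure}, as the common input to both chopped models.

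The first step applies Lemma~\ref{lem:chopping:correct}. It gives a structure-preserving bijection $h$ between $M^\star$ and $M^\star_{Alg}$ that respects inputs, edges and layer functions. From this we conclude that the two models compute the same function on any tuple of tensors indexed by $K$. To pass from ``isomorphic graph with equal layer functions'' to ``equal outputs'', we argue by a short induction along $\tau$ over $V^\star$. The claim is that the activation of $v$ in $M^\star$ equals the activation of $h(v)$ in $M^\star_{Alg}$. The base case is the input nodes $u_k\mapsto mutInp[k]$, where both sides are the identity. The inductive step uses clause (b) to match predecessor sets and clause (c) to match layer functions.

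One bookkeeping point must be checked in this step. The ordering of arguments fed to the Keras model is $[il\mid(\_,il)\in mutInp]$, while the memo-table tuple is indexed by $K$. Since both are keyed by $k\in K$ and $h(u_k)=mutInp[k]$, the $k$-th memo entry reaches the right input. The outputs also match, since $O^\star=O$ and $h$ is the identity on the underlying layers. Instantiating at $(\mathbf{a}_k(\mathbf{x}))_{k\in K}$ then yields
\[
M^\star_{Alg}\bigl((\mathbf{a}_k(\mathbf{x}))_{k\in K}\bigr)=M^\star\bigl((\mathbf{a}_k(\mathbf{x}))_{k\in K}\bigr).
\]

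The second step invokes Lemma~\ref{lem:full-and-chopped:equiv} directly, which gives $M^\star\bigl((\mathbf{a}_k(\mathbf{x}))_{k\in K}\bigr)=M'(\mathbf{x})$. This relies on the determinism assumption stated at the start of \cref{sec:theory:soundness}, so that the memoized $\mathbf{a}_k(\mathbf{x})$ computed once by the instrumented model is the same value the full mutant would recompute. Combining the two equalities gives the theorem.

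The main obstacle is the first step. Lemma~\ref{lem:chopping:correct} is phrased as ``compute the same function'' via a graph isomorphism, and its functional-model clause (c) is only sketched. I would therefore make explicit that the isomorphism yields pointwise equal activations under the correspondence of input positions. Concretely, the input list built at Line 21 of Algorithm~\ref{alg:chopping} must be aligned with the memo-table tuple $[MT[k]\mid k\in K]$ used in Algorithm~\ref{alg:mure}, e.g.\ both iterating $K$ in the same (increasing index) order. If that alignment is taken as part of the construction, the remaining argument is routine.
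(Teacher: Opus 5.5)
Your proposal is correct and follows the paper's route: the paper proves Theorem~\ref{thm:soundness} simply by composing Lemma~\ref{lem:chopping:correct} ($M^\star_{Alg}$ and $M^\star$ compute the same function) with Lemma~\ref{lem:full-and-chopped:equiv} ($M^\star$ applied to the memoized activations equals $M'(\mathbf{x})$). The details you add, namely the induction turning the graph isomorphism into pointwise equal activations and the alignment of the \texttt{mutInp} input order with $[MT[k]\mid k\in K]$, are left implicit in the paper but do not change the argument.
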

\begin{proof}
    Directly follows from Lemmas~\ref{lem:chopping:correct} and~\ref{lem:full-and-chopped:equiv}.
\end{proof}

\subsection{Acceleration}\label{sec:theory:accel}
We now prove that testing chopped mutants is faster than testing full mutants.
We use an abstract evaluation-cost function $cost(F)$, that represents the computational cost of evaluating DNN model $F$ once on a given input data point $\mathbf{x}$ and that satisfies the following \textit{cost function axioms}:
\begin{itemize}
    \item \textbf{C1: Non-negativity:} $cost(F)\geq 0$ for every DNN model $F$.
    \item \textbf{C2: Additivity:} If a model $F$ can be decomposed as a prefix $P$ and a suffix $S$, such that running $F$ on an input amounts to running $P$ on the input and then feeding its output(s) to $S$, then $cost(F)=cost(P)+cost(S)$.
    \item \textbf{C3: Invariance:} If two DNN models $F$ and $G$ have isomorphic computation graphs, \ie, they have the same layers, same connectivity, and same parameters, then $cost(F)=cost(G)$.
\end{itemize}

As presented in~\cref{sec:approach}, \tool clusters mutants based on their $K$ values, which is the set of memoization boundary nodes.
We first express the acceleration results for only one, non-empty, cluster, denoted $\mathcal{C}$ in this section, as a lemma.
Proof of acceleration of multiple mutant clusters directly follows from this result.
Let $M\downarrow K$ denote the model obtained by running $M$ only up to the boundary nodes $K$, which is the shared prefix needed to calculate the memo table for the mutants in $\mathcal{C}$, \ie, $(\mathbf{a}_k(\mathbf{x})_{k\in K})$.

\begin{lemma}[In-Cluster Acceleration]\label{lem:accel}
    Given a DNN model $M$, a set $\mathcal{C}$ of mutants, a set $K$ of memoization boundary nodes, we have $cost(M\downarrow K)+\sum_{M'\in \mathcal{C}}cost(M^\star_{Alg})\leq \sum_{M'\in \mathcal{C}} cost(M')$, for every input data point $\mathbf{x}$.
    This inequality is strict if $|\mathcal{C}|>1$ and $cost(M\downarrow K)>0$.
\end{lemma}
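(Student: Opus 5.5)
The plan is to decompose each full mutant into a shared prefix and a mutated suffix, and then apply the cost axioms termwise. Fix a cluster $\mathcal{C}$ with common boundary $K$. For each $M'\in\mathcal{C}$ with earliest mutated layer $mmi$, I will argue that running $M'$ on $\mathbf{x}$ amounts to two stages. First, run the prefix of $M'$ consisting of the layers $v_j$ with $j<mmi$, reading off the activations $(\mathbf{b}_k(\mathbf{x})_{k\in K})$. Then feed these activations into the suffix, which is exactly the chopped mutant $M^\star$. This is precisely the content of Lemma~\ref{lem:full-and-chopped:equiv}: $M'(\mathbf{x})=M^\star(\mathbf{a}_k(\mathbf{x})_{k\in K})$. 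That lemma's proof also shows that the prefix of $M'$ agrees with the original model before $mmi$, so $\mathbf{b}_k=\mathbf{a}_k$ for $k\in K$. By Lemma~\ref{lem:chopping:correct}, the suffix may be taken to be $M^\star_{Alg}$, since it has an isomorphic computation graph.

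Axiom C2 then gives $cost(M')=cost(P_{M'})+cost(M^\star_{Alg})$, where $P_{M'}$ is the prefix of $M'$ up to $K$. The prefix $P_{M'}$ uses only unmutated layers, because every layer before $mmi$ has $\Phi'=\Phi$. Its computation graph is therefore the same as that of $M\downarrow K$: same layers, same connectivity, same parameters. Axiom C3 then yields $cost(P_{M'})=cost(M\downarrow K)$, independently of the choice of $M'\in\mathcal{C}$. Summing over the cluster gives
\[
\sum_{M'\in\mathcal{C}}cost(M')=|\mathcal{C}|\cdot cost(M\downarrow K)+\sum_{M'\in\mathcal{C}}cost(M^\star_{Alg}).
\]
Since $\mathcal{C}$ is non-empty, we have $|\mathcal{C}|\geq 1$. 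Together with C1, i.e.\ $cost(M\downarrow K)\geq 0$, this gives the non-strict inequality. If $|\mathcal{C}|>1$ and $cost(M\downarrow K)>0$, then $(|\mathcal{C}|-1)\,cost(M\downarrow K)>0$, which gives strictness.

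The main obstacle is justifying the identification $P_{M'}\cong M\downarrow K$. Depending on how $M\downarrow K$ is read, it could mean either of two things: all layers $v_j$ with $j<mmi$, or only the ancestors of $K$. Similarly, $mmi$ may differ between mutants in the same cluster even though $K$ is shared. I would resolve this by defining both $M\downarrow K$ and the prefix of $M'$ as the subgraph of ancestors of $K$, including $K$ itself. This subgraph depends only on $K$ and on $(V,E,\Phi)$, not on $mmi$. I would then check that the prefix/suffix split required by C2 is legitimate. Every edge entering the suffix from outside originates at a node of $K$, by the definition of $K$. Any layer with index below $mmi$ that is not an ancestor of $K$ has no descendant among the outputs reachable through the suffix. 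Such layers can be dropped, since keeping them only adds non-negative cost by C1. Dropping them at worst turns the first equality into $cost(M')\geq cost(M\downarrow K)+cost(M^\star_{Alg})$, which still suffices for the claimed inequality.
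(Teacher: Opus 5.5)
Your proposal is correct and follows the paper's proof: C2 splits each $M'\in\mathcal{C}$ into the shared unmutated prefix and the chopped suffix, Lemma~\ref{lem:chopping:correct} with C3 identifies the suffix cost as $cost(M^\star_{Alg})$, and summing leaves a gap of $(|\mathcal{C}|-1)\,cost(M\downarrow K)\geq 0$, which is strict under the stated hypotheses. Your extra care is a refinement rather than a different route: you make $\mathcal{C}\neq\emptyset$ explicit and take the prefix to be the ancestor subgraph of $K$, so it does not depend on $mmi$, whereas the paper simply asserts that a $K$-keyed cluster has a single $mmi$; note that discarding the non-ancestor layers needs C2 to split them off before C1 can bound their cost.
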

\begin{proof}
    Take an arbitrary input data point $\mathbf{x}$.
    Since all the mutants in $\mathcal{C}$ share the same earliest mutated layer index, \ie, $mmi$, the following holds for the mutants in $\mathcal{C}$ by construction: for all $M'\in \mathcal{C}$, all layers $v$ with $\tau(v)<mmi$ are unmutated, and hence are identical to $M$.
    Let $P$ denote the prefix submodel consisting of these unmutated layers up to the boundary nodes $K$.
    By definition, evaluating $M\downarrow K$ on $\mathbf{x}$ is equivalent to evaluating $P$ on $\mathbf{x}$.
    Thus, we have:
    \begin{equation}\label{eq:accel:1}
        cost(M\downarrow K)=cost(P).
    \end{equation}
    For each mutant $M'\in \mathcal{C}$, we denote the suffix of submodel, \ie, everything after the memoization boundary nodes, as $S_{M'}$.
    Then each full mutant $M'$ decomposes into $P$ and $S_{M'}$.
    To produce the output of $M'$ we first apply the prefix $P$ on the input data point $\mathbf{x}$, and its activations at $K$ are then fed to $S_{M'}$.
    Then by C2, we have: $cost(M')=cost(P)+cost(S_{M'})$, for all $M'\in \mathcal{C}$.
    But $S_{M'}$ is the same as $M^\star$, and according to Lemma~\ref{lem:chopping:correct}, $M^\star$ is isomorphic to $M^\star_{Alg}$, so by C3, we have $cost(S_{M'})=cost(M^\star)=cost(M^\star_{Alg})$.
    Thus, we have: $cost(M')=cost(P)+\sum_{M'\in \mathcal{C}}cost(M^\star_{Alg})$.
    The total cost of \textit{vanilla mutation testing}, \ie, testing with no memoization, is then $\sum_{M'\in \mathcal{C}}cost(M')$.
    Using~\ref{eq:accel:1}, we can rewrite this as: $\sum_{M'\in \mathcal{C}}cost(M')=\sum_{M'\in \mathcal{C}}(cost(P)+cost(M^\star_{Alg}))=|\mathcal{C}|\times cost(P)+\sum_{M'\in \mathcal{C}}cost(M^\star_{Alg})$
    Unlike vanilla mutation testing, \tool evaluates $M\downarrow K$ once at the cost of $cost(P)$, and for each mutant $M'\in \mathcal{C}$ it runs the chopped mutant $M^\star_{Alg}$.
    Thus, the total cost of memoized mutation testing for the cluster $\mathcal{C}$ is $cost(P)+\sum_{M'\in \mathcal{C}}cost(M^\star_{Alg})$.
    By subtracting the memoized mutation-testing cost from the vanilla mutation-testing cost, we obtain $\sum_{M' \in \mathcal{C}} cost(M') - (cost(P) + \sum_{M' \in \mathcal{C}} cost(M^\star_{\mathit{Alg}}))$.
    This simplifies to $|\mathcal{C}|\times cost(P)+\sum_{M'\in \mathcal{C}}cost(M^\star_{Alg}) - (cost(P)+\sum_{M'\in \mathcal{C}}cost(M^\star_{Alg}))$, and with a bit of algebra, we get $(|\mathcal{C}| - 1)\times cost(P)$.
    By C1, $cost(P) \ge 0$, and therefore the difference is non-negative. Hence, $cost(P) + \sum_{M' \in \mathcal{C}} cost(M^\star_{\mathit{Alg}}) \le \sum_{M' \in \mathcal{C}} cost(M').$ Finally, by Equation~(1), $cost(P)=cost(M\downarrow K)$, so $cost(M\downarrow K) + \sum_{M' \in \mathcal{C}} cost(M^\star_{\mathit{Alg}}) \le \sum_{M' \in \mathcal{C}} cost(M')$. Moreover, if \(|\mathcal{C}|>1\) and \(cost(M\downarrow K)>0\), then $(|\mathcal{C}|-1)\cdot cost(P) = (|\mathcal{C}|-1)\cdot cost(M\downarrow K)>0$. Thus, the memoized cost is strictly smaller than the vanilla cost, and the inequality above is strict.
\end{proof}

The above lemma concerns a single cluster of mutants that share the same $mmi$ and $K$.
We can generalize this result to multiple clusters in a straightforward manner.

\begin{theorem}[Acceleration]\label{thm:accel:gen}
    Given a DNN model $M$ and a set of mutants $\{M^{(1)}, \dots, M^{(n)}\}$.
    Assume that these mutants are clustered into $p$ clusters $\mathcal{C}_1, \dots, \mathcal{C}_p$, which their respective boundary node sets $K_1, \dots, K_p$ and earliest mutated layer indices ${mmi}_1, \dots {mmi}_p$. We have $\sum_{i\in\{1,\dots,p\}} cost(M\downarrow K_i)+\sum_{M'\in \mathcal{C}_i}cost(M^\star_{{Alg}_i})\leq \sum_{i\in\{1,\dots,n\}} cost(M^{(i)})$, for every input data point $\mathbf{x}$.
    This inequality is strict, if at least one $|\mathcal{C}_i|>1$ and $cost(M\downarrow K_i)>0$.
\end{theorem}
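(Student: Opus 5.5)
The plan is to reduce Theorem~\ref{thm:accel:gen} to Lemma~\ref{lem:accel} applied to each cluster, and then sum the resulting inequalities. Fix an input data point $\mathbf{x}$. By the clustering in Lines 3--9 of Algorithm~\ref{alg:mure}, every mutant id is appended to exactly one bucket, namely the one keyed by its $K$. Hence $\mathcal{C}_1,\dots,\mathcal{C}_p$ partition $\{M^{(1)},\dots,M^{(n)}\}$. The first step is to record this partition property explicitly, because it lets us rewrite the right-hand side as
\[
\sum_{i\in\{1,\dots,n\}} cost(M^{(i)})=\sum_{i=1}^{p}\sum_{M'\in\mathcal{C}_i} cost(M').
\]
This is just reordering a finite sum of non-negative terms, which are well defined by C1.

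Second, for each $i$, I apply Lemma~\ref{lem:accel} to the cluster $\mathcal{C}_i$ with boundary $K_i$. Here $M^\star_{Alg_i}$ denotes $\mathtt{chopMutant}(M',{mmi}_i,dg,K_i,MT_i)$ for $M'\in\mathcal{C}_i$. This gives
\[
cost(M\downarrow K_i)+\sum_{M'\in\mathcal{C}_i}cost(M^\star_{Alg_i})\le\sum_{M'\in\mathcal{C}_i}cost(M').
\]
The one point to check is that the lemma's hypothesis holds: all mutants in a cluster must share a common prefix up to $K_i$. The proof of Lemma~\ref{lem:accel} actually uses a common $mmi$, whereas clusters are keyed only by $K$. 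So I need to argue that mutants with equal $K$ but different $mmi$ still decompose as the same prefix $P=M\downarrow K_i$ followed by a suffix that is isomorphic to their chopped mutant.

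This is where I expect the main obstacle. I would handle it by defining $P$ as the subgraph of layers needed to compute $(\mathbf{a}_k(\mathbf{x}))_{k\in K_i}$. Every layer in this subgraph precedes each mutant's own $mmi$ and is therefore unmutated, so by Lemma~\ref{lem:full-and-chopped:equiv} the activations at $K_i$ coincide with the original model's. Additivity C2 and invariance C3 then apply per mutant, exactly as in the lemma's proof. If the per-$mmi$ reading of the theorem is intended, namely that the statement's listing of ${mmi}_i$ per cluster implies a single $mmi$ per cluster, I would simply note this and invoke the lemma directly.

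Finally, I sum the $p$ inequalities over $i$ and use the partition identity above to obtain the stated non-strict bound. For strictness, suppose some $i_0$ has $|\mathcal{C}_{i_0}|>1$ and $cost(M\downarrow K_{i_0})>0$. Then Lemma~\ref{lem:accel} gives a strict inequality for $i_0$. All other clusters contribute non-strict inequalities, and adding a strict inequality to non-strict ones yields a strict inequality, so the total is strict. Equivalently, the total saving is $\sum_i(|\mathcal{C}_i|-1)\,cost(M\downarrow K_i)$. Each term is non-negative by C1 and $|\mathcal{C}_i|\ge1$, and the $i_0$ term is positive.
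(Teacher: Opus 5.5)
Your proposal is correct and follows the same route as the paper. The paper's proof is the one-line observation that the theorem follows from Lemma~\ref{lem:accel} and axiom C1: apply the lemma to each cluster and sum, which you carry out with the partition identity and the strictness bookkeeping made explicit. Your concern that Algorithm~\ref{alg:mure} keys clusters by $K$ rather than by $mmi$ is fair but mostly moot, for two reasons. Whenever layer $v_{mmi-1}$ has any successor it belongs to $K$, so $mmi = 1+\max_{k\in K}\tau(k)$ is determined by $K$. In any case, the statement fixes a single ${mmi}_i$ per cluster.
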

\begin{proof}
    Directly follows from axiom C1 and Lemma~\ref{lem:accel}.
\end{proof}

Theorem~\ref{thm:accel:gen} proves that there is a speed up given a set of \textit{pre-constructed} chopped mutants, but it does not determine how much acceleration should be expected.
In particular, this theorem does not take into account the costs associated with (1) constructing chopped mutants; (2) forming clusters; (3) constructing memo table; and (4) Keras modeling building as well as running the models on a physical computer.
Additionally, it does not tell us how the percentage of neurons mutated impacts the amount of acceleration.
Therefore, conducting experiments is necessary for getting a sense of how much speed up should be expected given all the aforementioned factors.

\section{Experiments}\label{sec:experiments}
Our theoretical results in~\cref{sec:theory} provide guarantees that \tool is lossless, and that \tool is expected to result in speed up.
However, many details such as the amount of speed up as well as how it compares against related work are not clear.
We answer the remaining questions empirically by investigating the following research questions (RQs).
\begin{itemize}
    \item \textbf{RQ1:} How much speed up does \tool offer compared to the baseline approaches?
    \item \textbf{RQ2:} How does mutation generation selection ratio impact speed gain by \tool?
\end{itemize}

In RQ1, we compare acceleration achieved by \tool to that of two state-of-the-art approaches \dms~\cite{bib:ghanbari2025using} and BSS~\cite{bib:shen2021boundary} on 7 models.
We observed that \tool consistently accelerates DNN mutation testing across all models, with an speed gain of 44.54\%.
While \dms and BSS provide higher acceleration (63.59\% and 88.97\%, on average, respectively), they incur mutation score error.

In RQ2, we provide empirical evidence that the speed-up provided by \tool decreases gradually as the mutation generation selection ratio increases.
The results also indicate that while mutation generation selection ratio reduces memoization potential, it does not eliminate the practicality of \tool even in mutation generation selection ratios as high as 5\%.

\subsection{Baseline Approaches}\label{sec:exp:baselines}
We compare \tool to three baseline approaches: (1) vanilla approach, which tests mutants exhaustively; (2) \dms~\cite{bib:ghanbari2025using}, which clusters mutants based on behavioral similarity to test representatives from each cluster and reuse their results for all the mutants in the clusters they represent; and (3) BSS~\cite{bib:shen2021boundary}, which tests the mutants using only a subset of data points around the decision boundary of the original model rather than the whole test dataset.
These techniques are the most effective representatives of a broad range of related techniques.
\dms is the latest representative of techniques that speed up mutation testing by testing fewer mutants, \eg, neuron and mutant clustering~\cite{bib:lyons2025on,bib:ghanbari2024incite}, random mutant selection~\cite{bib:ghanbari2023mutation}, mutator selection~\cite{bib:feng2022mutation}, and higher-order mutation~\cite{bib:li2022how}.
Furthermore, BSS represents techniques that accelerate mutation testing by reducing mutant inputs, \eg, random sample selection~\cite{bib:ghanbari2025using}.

\subsection{Measures}\label{sec:exp:measures}
We quantize \textit{speed-up}, aka \textit{acceleration} or \textit{gain}, as $\frac{t_v-t}{t_v}$, where $t_v$ is the vanilla mutation testing time, while $t$ is the mutation testing time for the acceleration techniques, \eg, \tool, \dms.
Both $t_v$ and $t$ are measured in seconds.
We would like to emphasize that the measured time for \tool includes all preprocessing and bookkeeping overheads, including dependence-graph construction, mutant clustering, model instrumentation, memo-table construction, and mutant chopping.
Therefore, the reported gains are net gains after accounting for any overhead \tool may have.

The second measure that we use is \textit{mutation score}.
We calculate mutation score \ala Ma \etal~\cite{bib:ma2018deepmutation}: $\frac{1}{|M|\times|L|}\sum_{\mu\in M}|\mbox{killingLabels}(\mu)|$, where $M$ is the set of mutants that is obtained by applying a set of mutators on a given DNN model $m$, and $L=\{\mbox{label(t)}\mid t\in T\}$ is the set of labels in a test dataset $T$, wherein the function `$\mbox{label}$' returns the ground-truth label for the data points in $T$.
For any mutant $\mu\in M$, $\mbox{killingLabels}(\mu)$ is defined to be $\{\mbox{label}(t)\mid t\in T~\mbox{and}~\mbox{kill}(\mu, t)\}$.
A mutant $\mu$ is said to be \textit{killed} by a test data point $t\in T$, denoted by the predicate $\mbox{kill}(\mu,t)$, if $\mbox{argmax}(m(t))=\mbox{label}(t)$ and $\mbox{argmax}(\mu(t))\neq\mbox{label}(t)$, where $\mbox{argmax}(m(t))$, or $\mbox{argmax}(\mu(t))$, denotes the label predicted by the model $m$, or its mutant $\mu$, for $t$.

For lossy approaches, the \textit{mutation score error} or \textit{loss}, \ie, the percentage of deviation of accelerated mutation score from vanilla mutation score, is calculated as $\frac{|MS_V - MS|}{MS_V}$, where $MS_V$ is the mutation score obtained using vanilla approach, while $MS$ represents the mutation score obtained using a lossy acceleration technique.
For \tool, mutation score error is zero by design, and it is \textit{not} because DeepMutation-style mutation score is aggregate and potentially obscures minor behavioral deviations.
It is solely because chopped mutants generated by \tool produce exactly the same outcome as original mutants.

\subsection{Benchmark and Setup}\label{sec:exp:benchmark}
Table~\ref{tab:benchmark} lists the DNN models used in the experiments, where each row represents a model.
We have used six types of DNN architectures in our experiments: 4-layer FCNN, LetNet-5~\cite{bib:lecun1998gradient}, ResNet-10~\cite{bib:he2016deep}, MobileNetV2~\cite{bib:sandler2018mobilnetv2}, and RNN with LSTM layers.
We have use standard LetNet-5 architecture that represents the family of CNN models with no residual blocks, such as AlexNet~\cite{bib:krizhevsky2012imagenet} and VGGNet~\cite{bib:simonyan2015very}.
We have also implemented ResNet-10 and MobileNetV2 based on their respective standard architectures, representing models with residual block of different complexity and layouts.
Our RNN architecture consists of an embedding layer, two LSTM layers, and an output layer with softmax activation.

\begin{table} 
    \centering
    \caption{Models benchmark. \# Train and \# Test represent the number of data points in the train and test datasets, respectively, and \# Cls denotes the number of classes/labels in the test dataset.}\label{tab:benchmark}
    \resizebox{0.6\textwidth}{!}{
        \begin{tabular}{|c|c||r|r|r|r|r|r|}
            \hline
            \multicolumn{1}{|c|}{\textbf{Architecture}} & \textbf{Dataset} & \multicolumn{1}{c|}{\textbf{Scope}} & \multicolumn{1}{c|}{\textbf{Size}} & \multicolumn{1}{c|}{\textbf{\# Train}} & \multicolumn{1}{c|}{\textbf{\# Test}} & \multicolumn{1}{c|}{\textbf{\# Cls}} & \multicolumn{1}{c|}{\textbf{Test Acc.}} \\
            \hline\hline
            \multicolumn{1}{|c|}{\textbf{LeNet-5}} & \textbf{SVHN} & \multicolumn{1}{r|}{\multirow{7}{*}{RQ1}} & 62006 & 73257 & 26032 & 10    & 0.8440 \\
            \cline{1-2}\cline{4-8}    \multicolumn{1}{|c|}{\multirow{2}{*}{\textbf{MobileNetV2}}} & \textbf{Caltech-101} &       & 334886 & 7316  & 1828  & 102   & 0.6461 \\
            \cline{2-2}\cline{4-8}          & \textbf{CIFAR-10} &       & 287690 & 50000 & 10000 & 10    & 0.7233 \\
            \cline{1-2}\cline{4-8}    \multicolumn{1}{|c|}{\multirow{2}{*}{\textbf{ResNet10}}} & \textbf{Caltech-101} &       & 5033446 & 7316  & 1828  & 102   & 0.6723 \\
            \cline{2-2}\cline{4-8}          & \textbf{CIFAR-10} &       & 287690 & 50000 & 10000 & 10    & 0.8132 \\
            \cline{1-2}\cline{4-8}    \multicolumn{1}{|c|}{\multirow{2}{*}{\textbf{RNN}}} & \textbf{IMDB} &       & 2642562 & 2642562 & 25000 & 2     & 0.8083 \\
            \cline{2-2}\cline{4-8}          & \textbf{Reuters} &       & 2648282 & 8982  & 2246  & 90    & 0.6399 \\
            \hline
            \multicolumn{1}{|c|}{\multirow{4}{*}{\textbf{FCNN}}} & \textbf{EMNIST} & \multicolumn{1}{r|}{\multirow{8}{*}{RQ2}} & 45676 & 124800 & 20800 & 26    & 0.8794 \\
            \cline{2-2}\cline{4-8}          & \textbf{FMNIST} &       & 44860 & 60000 & 10000 & 10    & 0.8779 \\
            \cline{2-2}\cline{4-8}          & \textbf{KMNIST} &       & 44860 & 60000 & 10000 & 10    & 0.8698 \\
            \cline{2-2}\cline{4-8}          & \textbf{MNIST} &       & 44860 & 60000 & 10000 & 10    & 0.9742 \\
            \cline{1-2}\cline{4-8}    \multicolumn{1}{|c|}{\multirow{4}{*}{\textbf{LeNet-5}}} & \textbf{EMNIST} &       & 45786 & 124800 & 20800 & 26    & 0.9197 \\
            \cline{2-2}\cline{4-8}          & \textbf{FMNIST} &       & 44426 & 60000 & 10000 & 10    & 0.8853 \\
            \cline{2-2}\cline{4-8}          & \textbf{KMNIST} &       & 44426 & 60000 & 10000 & 10    & 0.9410 \\
            \cline{2-2}\cline{4-8}          & \textbf{MNIST} &       & 44426 & 60000 & 10000 & 10    & 0.9899 \\
            \hline
        \end{tabular}
    }
\end{table}

We have trained these model architecture on various datasets of different complexities, such as MNIST~\cite{bib:deng2012mnist}, a dataset of hand-written digits with classes 0-9, Fashion MNIST~\cite{bib:xiao2017fashion} (FMNIST), a dataset with ten fashion classes, the digit section of Kuzushiji MNIST~\cite{bib:clanuwat2018deep} (KMNIST), a dataset of hand-written Japanese digits 0-9, and SVHN~\cite{bib:netzer2011reading}, a dataset of real-world images for 10-class classification of digits. 
We trained ResNet-10 and MobileNetV2 model architectures on more complex datasets such as CIFAR-10~\cite{bib:krizhevsky2009learning}, consisting of $32\times32$ color images in 10 different classes, and Caltech-101~\cite{bib:lazebnik2006beyond}, consisting $224\times224$ color images belonging to 102 different object categories.
Lastly, we used Reuters~\cite{bib:lewis1997reuters} and IMDB~\cite{bib:maas2011learning} datasets for training the RNN models.
Reuters is a dataset for 90-class classifiers for documents with news articles, and IMDB is a dataset for binary sentiment classification for movie reviews.

As indicated by the column ``Scope'' in the table, we are using more complex models for comparing \tool to state-of-the-art approaches, while simpler FCNN and LeNet-5 models are used in RQ2, making it feasible to run \tool and vanilla mutation analysis thousands of times

In the rest of the paper, we use the combination of model name and training dataset identifier to uniquely identify each of the 12 models, \eg, FCNN-FMNIST, denotes the model with FCNN architecture trained on FMNIST dataset.

We have used two identical Dell Precision workstations with AMD Ryzen Threadripper @ 2.7 GHz CPU, 1 TB of RAM to conduct our experiments.
Both machines run Ubuntu 22.04.4 LTS and no GPUs are used in our experiments.

\subsection{Answering RQ1}\label{sec:exp:rq1}
In this RQ, we measure the speed-up offered by \tool and compare it to that of two other approaches, namely \dms and BSS.
We run these techniques on 7 DNN models of varying sizes and complexities.
The results are reported in Table~\ref{tab:rq1-results}.
Since \dms and BSS are lossy approaches, in the sense that they accelerate mutation analysis at the cost of some error in the calculated mutation score, for these approaches we have also reported the mutation score error in the table.
Since training, as well as mutation generation, involve randomness, we have repeated our measurement 300 times (=10 rounds of training $\times$ 10 rounds of mutation generation $\times$ 3 rounds of mutation testing) and the averaged values are reported in Table~\ref{tab:rq1-results}.

\begin{table} 
    \centering
    \caption{\tool \vs state-of-the-art approaches in terms of speed gain over vanilla approach. Mutation score loss is reported for lossy approaches.}\label{tab:rq1-results}
    \resizebox{0.5\textwidth}{!}{
        \begin{tabular}{|c|c||r|r|r|r|r|}
            \hline
            \multicolumn{1}{|c|}{\multirow{2}{*}{\textbf{Architecture}}} & \multirow{2}{*}{\textbf{Dataset}} & \multicolumn{1}{c|}{\textbf{\tool}} & \multicolumn{2}{c|}{\textbf{\dms}} & \multicolumn{2}{c|}{\textbf{BSS}} \\
        \cline{3-7}          &       & \multicolumn{1}{c|}{\textbf{Gain}} & \multicolumn{1}{c|}{\textbf{Gain}} & \multicolumn{1}{c|}{\textbf{Loss}} & \multicolumn{1}{c|}{\textbf{Gain}} & \multicolumn{1}{c|}{\textbf{Loss}} \\
            \hline\hline
            \multicolumn{1}{|c|}{\textbf{LeNet-5}} & \textbf{SVHN} & 58.96\% & 60.90\% & 2.75\% & 80.86\% & 1.77\% \\
            \hline
            \multicolumn{1}{|c|}{\multirow{2}{*}{\textbf{MobileNetV2}}} & \textbf{Caltech-101} & 33.28\% & 50.65\% & 3.73\% & 97.27\% & 10.02\% \\
        \cline{2-7}          & \textbf{CIFAR-10} & 65.36\% & 85.39\% & 2.06\% & 96.44\% & 3.58\% \\
            \hline
            \multicolumn{1}{|c|}{\multirow{2}{*}{\textbf{ResNet10}}} & \textbf{Caltech-101} & 44.42\% & 79.05\% & 6.07\% & 94.41\% & 4.30\% \\
        \cline{2-7}          & \textbf{CIFAR-10} & 75.57\% & 88.91\% & 4.04\% & 95.20\% & 3.69\% \\
            \hline
            \multicolumn{1}{|c|}{\multirow{2}{*}{\textbf{RNN}}} & \textbf{IMDB} & 17.90\% & 56.56\% & 2.31\% & 94.14\% & 0.12\% \\
        \cline{2-7}          & \textbf{Reuters} & 16.29\% & 23.67\% & 2.09\% & 64.46\% & 11.46\% \\
            \hline
        \end{tabular}
    }
\end{table}%

Across the studied architectures, Table~\ref{tab:rq1-results} shows that while \dms and BSS achieve higher gain than \tool on average (63.59\% and 88.97\%, respectively, compared to 44.54\% of \tool), they do so by incurring some inaccuracy in mutation score.
Specifically, we can observe that in this dataset, \dms results in an average of 3.29\% mutation score error and BSS incurs a bit more loss of 4.99\%, on average.
Although the average acceleration by \tool is relatively lower than the other approaches, it does not incur any mutation score error by construction.

These contrasting characteristics suggest different use cases.
Lossy techniques such as \dms and BSS are appropriate when mutation testing is used as a fast heuristic signal, \eg, in exploratory or highly iterative workflows where approximate adequacy estimates are sufficient.
\tool is more appropriate when preserving the exact mutation testing outcomes is important, \eg, in mutation-guided test prioritization, debugging, or repair.
Although the average mutation-score losses of \dms and BSS are moderate in our experiments, they are not uniformly negligible: the maximum observed losses reach 12.88\% and 19.13\%, respectively.
Such deviations may affect downstream analyses and applications.
\tool avoids this trade-off by preserving the vanilla mutation testing result exactly while still cutting mutation testing time nearly in half.

To assess the stability of these results, we additionally computed per-model 95\% confidence intervals and non-parametric statistical tests across repeated runs.
The confidence intervals are narrow across models, which indicates that the observed speedups are stable under stochastic training and mutation generation.
Pairwise Mann-Whitney $U$-tests~\cite{bib:mann1947test} with Holm correction~\cite{bib:holm1979simple} show that the differences among approaches are statistically significant for all models.
The full confidence intervals table is included in our replication package~\cite{bib:replica}.

\subsection{Answering RQ2}\label{sec:exp:rq2}
To answer RQ2, we measure the speed-up offered by \tool for each round of mutation generation with a certain \textit{mutation generation selection ratio}.
As we discussed in~\cref{sec:app:mut-gen}, the amount of mutation in DeepMutation is controlled by its mutation generation selection ratio parameter.
The default value for this parameter is 1\%, but here we measure \tool speed-up against 3\% and 5\% mutation generation selection ratio.
We observed that values greater than 5\% generated so many low-quality mutants that were filtered out during DeepMutation's mutant quality assessment, thereby causing the program to loop for days without generating any mutants.

To account for the inherent randomness in both model training and mutation generation, we trained each model independently 10 times.
For each trained model, we generated mutants 10 times for each 1\%, 3\%, and 5\% mutation generation selection ratio values.
Finally, \tool is executed 3 times for each mutant set.
This resulted in 900 independent measurements per model configuration, substantially reducing the risk that our findings are artifacts of stochastic variation in training convergence, random weight perturbations, \etc.
These data are plotted in Fig.~\ref{fig:rq2-plots}, wherein the speed-up values are averaged for all 900 runs.

We can see from the figure that there is an inverse relationship between the mutation generation selection ratio and the acceleration achieved by \tool.
When the percentage of mutated neurons is small (\eg, 1\%), \tool consistently yields substantial gains, reducing testing cost by approximately 40\%–55\% for FCNN models and 60\%–65\% for LeNet-5 models on average.
As the percentage increases, these gains diminish monotonically.
This trend is expected: mutation generation selection ratio reduces the amount of shared computation between mutants and the original model, thereby shrinking the reuse opportunity that \tool exploits.
In other words, as more neurons are mutated, mutations are more likely to affect earlier layers. This shortens the network prefix that can be memoized and, consequently, limits the achievable speed-up.

We also examined dispersion across runs for each mutation generation selection ratio.
The boxplots and means-and-95\%-CI plots in the replication package~\cite{bib:replica} show the same monotonic trend as Fig.~\ref{fig:rq2-plots}: speedup decreases as the mutation generation selection ratio increases, but the trend is stable rather than being driven by outliers.
Additionally, mutation score error remains zero across \textit{all} ratios and models.
We would like to emphasize that running \tool on GPU may make the mutation score calculated by \tool to drift from vanilla due to floating-point errors in most GPUs.

These results provide empirical evidence that mutation generation selection ratio does not limit the practicality of \tool.
Instead, its impact is gradual and predictable: while increasing mutation ratio reduces memoization potential, \tool continues to offer considerable acceleration across all studied models for any practically useful mutation generation selection ratio.
\begin{figure}[t]
    \centering
    \includegraphics[width=\textwidth]{figures/rq2.pdf}
    \caption{Average speed-up offered by \tool \vs mutation generation selection ratio}
    \Description[Average speed-up vs. mutation generation selection ratio]{Average speed-up vs. mutation generation selection ratio}
    \label{fig:rq2-plots}
\end{figure}

\section{Discussion}\label{sec:discussion}
\subsection{Practical Implication}\label{sec:discussion:implication}
We would like to emphasize that \tool, by itself, does not define a new testing adequacy criteria nor is it expected to expose failures that vanilla mutation testing would miss.
Rather, \tool preserves the exact semantics of vanilla mutation testing while reducing its cost.
This distinction matters because DNN mutation analysis is often used not for quantizing test adequacy, but as a source of information for downstream tasks such as test prioritization, adversarial sample generation, robustness evaluation, fault localization and repair, modularity analysis, and data augmentation.
In such settings, replacing vanilla mutation testing with a lossy approximation may change the downstream behavior of the technique.
\tool, instead, allows such workflows to use the \textit{same} mutation-testing outcome as vanilla mutation testing, but \textit{faster}.

\subsection{Threats to Validity}\label{sec:discussion:threats}
While we have provided formal proof for soundness and speed-up for \tool, most factors, such as randomness in higher order mutation and the overhead of mutant chopping and related bookkeeping tasks are difficult to model mathematically.
Therefore, we resort to empirical evaluation to take such factors into account, and like most empirical evaluations, our evaluation is subject to certain threats to validity.
Additionally, since the proofs are not machine-checked, they may contain error.
To mitigate the risks regarding potential errors in our proofs, we have checked our formalization and proofs multiple times, and have included them as part of the main text of the paper.

The first threat to validity of our empirical evaluation comes from the fact that DNN training and mutation generation are stochastic processes, \ie, mutation score calculated for one round of training and mutation generation can be different from that of another round.
To account for such randomness, we have trained each model 10 times and each model undergoes 10 rounds of mutation generation, resulting 100 sets of mutants for each model architecture-dataset pair.
Additionally, to account for randomness associated with system load, which we expect to be more predictable than randomness in training and/or mutation generation, we have executed \tool and other baseline approaches 3 times.
Finally, we have included all the models, measurements, and tools in our replication package~\cite{bib:replica} so that other researchers can reproduce our results.

Another threat is concerned with the range of mutation generation selection ratios we have studied in RQ2.
We observed that increasing mutation generation selection ratio beyond 5\% disturbs models so deeply that it turns them to trivial mutants, and we were not able to produce mutants even after running DeepMutation's mutation generation engine for days.
It could be possible that further experimentation above the 5\% ratio yields models that, given enough time, can generate non-trivial mutants to test \tool's behavior.
It may also be fruitful to rethink the mutation strategy in DeepMutation to improve success at higher selection ratios.

\section{Related Work}\label{sec:related}
Mutation analysis of DNNs is quite costly, \eg, mutating even the simplest DNN for classification of MNIST images results in hundreds of mutants, each of which must be tested against thousands of test data points.
While some speedup can be obtained using GPU support for numerical calculation, as provided by libraries such as Keras and NumPy, mutation remains an expensive process~\cite{bib:hu2019deepmutation++,bib:jahangirova2020empirical,bib:ghanbari2023mutation,bib:ghanbari2024incite}.
Motivated by the many potential practical applications of DNN mutation analysis, researchers have proposed several methods for reducing its costs.
Some of these methods are almost directly ported from mutation analysis of conventional programs into DNNs, while others take advantage of the unique characteristics of DNNs.
For example, Feng \etal~\cite{bib:feng2022mutation} and Wang \etal~\cite{bib:wang2023fine} identify sufficient subsets of existing mutators from the literature~\cite{bib:ma2018deepmutation,bib:shen2018munn} to avoid redundant mutants.
Ghanbari \etal~\cite{bib:ghanbari2023mutation} adopts random mutant selection, and Li \etal~\cite{bib:li2022how,bib:li2021second} reduces the cost of mutation testing with higher-order mutants.
Meanwhile, the technique introduced by Shen \etal~\cite{bib:shen2021boundary} relies on the assumption that mutants of a DNN model are more likely to produce different results for the test data points around the decision boundary of the model, so it samples the test data that lie at the decision boundary of the model under test.
Ghanbari~\cite{bib:ghanbari2024incite} proposed to accelerate mutation analysis by generating fewer mutants through clustering the neurons.
This approach together with mutant clustering based on the similarity of mutated weights have also been studied recently~\cite{bib:lyons2025on}.
More recently, Ghanbari and Tavakkol~\cite{bib:ghanbari2025using} proposed the use of Fourier analysis to generate comparable signatures of mutant behavior to use for mutant clustering and selecting one representative of each mutant for testing.

\section{Conclusions and Future Work}\label{sec:confuture}
This paper introduces \tool, the first provably lossless memoization-based acceleration framework for mutation testing of DNNs.
We formalized the execution semantics of memoized mutation testing and established theoretical guarantees that \tool preserves the exact mutation score of vanilla mutation testing, regardless of network architecture, dataset, or mutation operator.
We also provide a formal cost analysis and prove that \tool guarantees acceleration under certain conditions on mutant overlap.
Our empirical evaluation across a diverse set of DNN architectures show that \tool achieves substantial performance gains by reducing mutation testing cost by 44.54\%, on average, without incurring any error on mutation score.
Moreover, our analysis of mutation generation selection ratio with different values provide empirical evidence that while increasing mutation generation selection ratio gradually reduces memoization potential, \tool continues to deliver meaningful and predictable acceleration.

This paper opens several promising avenues for future exploration beyond the opportunities pointed out in the paper.
We plan to extend \tool to additional classes of architectures, including large transformer-based systems, to investigate whether new forms of structural sharing may further amplify memoization benefits.
We also we aim to explore adaptive memoization strategies, dynamic analysis techniques to identify additional reuse opportunities, and hybrid integration with lossy acceleration approaches to provide tunable points in the speed–accuracy design space.

\section*{Data Availability}
All our measurements and artifacts (such as software, extra tables, and figures) are available in our replication package~\cite{bib:replica}.

\section*{Acknowledgments}
We would like to thank Anonymous ISSTA 2026 Reviewers for their valuable feedback.
The first author is partially supported by the NSF grant \#2446393.
Any opinions, findings, and conclusions or recommendations expressed in this material are those of the authors and do not necessarily reflect the views of the NSF or their employers.

\bibliographystyle{ACM-Reference-Format}
\bibliography{main}

\end{document}